\definecolor{ourblue}{HTML}{0571a3}
\newenvironment{journalabstract}{\section*{Abstract}}
{}
\newcommand{\SupplementNameFull}{Supplementary Information}
\newcommand{\papertitle}{Optimal, near-optimal, and robust epidemic control}
\title{\papertitle}
\date{}
\author[1,3*]{Dylan H.~Morris}
\author[1*]{Fernando W.~Rossine}
\author[2]{Joshua B.~Plotkin}
\author[1]{Simon A.~Levin}
\affil[1]{Department of Ecology \& Evolutionary Biology, Princeton University, 106A Guyot Hall, Princeton, NJ 08544, USA}
\affil[2]{Department of Biology \& Department of Mathematics, The University of Pennsylvania, 433 S University Ave, Philadelphia, PA 19104, USA}
\affil[3]{Present address: Department of Ecology \& Evolutionary Biology, University of California Los Angeles, Terasaki Life Sciences Building, 610 Charles E. Young Dr South, Los Angeles, CA  90095, USA}
\affil[*]{these authors contributed equally; correspondence to \href{mailto:dylan@dylanhmorris.com}{dylan@dylanhmorris.com},  \href{mailto:frossine@princeton.edu}{frossine@princeton.edu}.}
\newcommand{\autocaption}[2]{\caption[#1]{\textbf{#1} #2}}
\newcommand{\snref}[1]{Supplementary Note \ref{#1}}
\begin{document}
\maketitle

\captionsetup*[figure]{labelfont={bf}, 
labelformat = mid,
labelsep = none,
name = {Fig.}}
\captionsetup*[table]{labelfont={bf}, 
labelformat = mid,
labelsep = none,
name = {Table}}

\begin{refsection}
\begin{journalabstract}
In the absence of drugs and vaccines, policymakers use non-pharmaceutical interventions such as social distancing to decrease rates of disease-causing contact, with the aim of reducing or delaying the epidemic peak. These measures carry social and economic costs, so societies may be unable to maintain them for more than a short period of time. Intervention policy design often relies on numerical simulations of epidemic models, but comparing policies and assessing their robustness demands clear principles that apply across strategies. Here we derive the theoretically optimal strategy for using a time-limited intervention to reduce the peak prevalence of a novel disease in the classic Susceptible-Infectious-Recovered epidemic model. We show that broad classes of easier-to-implement strategies can perform nearly as well as the theoretically optimal strategy. But neither the optimal strategy nor any of these near-optimal strategies is robust to implementation error: small errors in timing the intervention produce large increases in peak prevalence. Our results reveal fundamental principles of non-pharmaceutical disease control and expose their potential fragility. For robust control, an intervention must be strong, early, and ideally sustained.
\end{journalabstract}

\section*{Introduction}
New human pathogens routinely emerge via zoonotic spillover from other species. Examples include ebolaviruses \autocite{leroy2005fruit}, influenza viruses \autocite{taubenberger2010influenza}, and most recently the sarbecoronavirus SARS-CoV-2 \autocite{zhou2020pneumonia}. Many of these emerging pathogens are antigenically novel: the human population possesses little or no preexisting immunity \autocite{lipsitch2020cross}. This not only can increase disease severity \autocite{dan2021immunological} but also leads to explosive epidemic spread \autocite{lipsitch2020cross}. If left unchecked, that explosive spread can result in a large proportion of the population becoming synchronously infected, as occurred during the COVID-19 global pandemic. Policymakers initially have limited tools for controlling a novel pathogen epidemic; it can take months to develop drugs, and years to develop and distribute vaccines \autocite{graham2018novel}. If disease symptoms are severe, healthcare systems may be strained to the breaking point as the epidemic approaches its peak \autocite{anderson2020will}. 

In the absence of drugs and vaccines, mitigation efforts to reduce or delay the peak (flattening the curve \autocite{branswell2020flattening,anderson2020will}) rely on non-pharmaceutical interventions \autocite{who2006nonpharmaceutical} such as social distancing \autocite{kissler2020social} that decrease rates of disease-transmitting contact. These measures carry social and economic costs, and so societies may be unable to maintain them for more than a short period of time. 

Policy design for allocating non-pharmaceutical resources during the COVID-19 pandemic relied heavily on numerical simulations of epidemic models \autocite{kissler2020social,ferguson2020impact}, but it is difficult to compare predictions or assess robustness without broad principles that apply across strategies. Since models are not reality, robust model-based policy requires not only generating a desired outcome but also understanding what elements of the model are producing it. This typically requires analytical and theoretical understanding.

Relatively little is known about globally optimal strategies for epidemic control, regardless of principal aim \autocite{feng2007final,hollingsworth2011mitigation} or intervention duration. One result establishes that time-limited interventions to reduce peak prevalence should start earlier than time-limited interventions to reduce the final size \autocite{dilauro2020timing}. A number of studies of COVID-19 have used optimal control theory---an approach that relies on numerical optimization to study continuous error-correction \autocite{perkins2020optimal,dilauro2020covidfeedback}.

But the optimal time-limited strategy to reduce peak prevalence is not known. Without an analytical understanding of epidemic peak reduction, policy design based on numerical simulation may fail in unexpected ways; policies may be inefficient, non-robust, or both.

The early months of the COVID-19 pandemic demonstrated how difficult real-time epidemiological modeling, inference, and response can be. Large numbers of asymptomatic and mildly symptomatic cases \autocite{li2020substantial}, as well as difficulties with testing, particularly in the United States \autocite{shear2020lostmonth}, left policymakers with substantial uncertainty regarding the virus's epidemiological parameters and the case numbers in many locations. Countries including the United States and United Kingdom waited to intervene until transmission was widespread; retrospective analyses later claimed that even slightly earlier intervention would have saved many lives \autocite{pei2020differential,knock20212020}.

Epidemiological uncertainties mean that no policymaker can intervene at precisely the optimal time. To understand how this limitation can hinder policymaking, we assess the robustness of optimized interventions to timing error: what is the cost of intervening too early or too late?

We derive the theoretically optimal strategy for using a time-limited intervention to reduce the peak prevalence of a novel disease in the classic Susceptible-Infectious-Recovered (SIR) epidemic model \autocite{kermack1927contribution,weiss2013sir}. We show that broad classes of strategies that are easier to implement can perform nearly as well as this theoretically optimal strategy. But we show that neither the theoretically optimal strategy nor any of these near-optimal strategies is robust to implementation error: small errors in timing of the intervention produce large increases in peak prevalence. To prevent disastrous outcomes due to imperfect implementation, a strong, early, and ideally sustained non-pharmaceutical response is required.

\bigskip
\section*{Results}
\subsection*{Epidemic model and interventions}
We consider the standard susceptible-infectious-recovered (SIR) epidemic model \autocite{kermack1927contribution}, which describes the fractions of susceptible $S(t)$, infectious $I(t)$, and recovered $R(t)$ individuals in the population at time $t$ \autocite{weiss2013sir}. New infections occur proportional to $S(t)I(t)$ at a rate $\beta$, and infectious individuals recover at a rate $\gamma$. The model has a basic reproduction number $\Ro = \frac{\beta}{\gamma}$ and an effective reproduction number $\Reff = \frac{\beta}{\gamma} S(t)$. We denote the peak prevalence by \Imax{}.

We consider interventions that reduce the effective rate of disease-transmitting contacts, $\betat$ below its value in the absence of intervention $\beta$, which we treat as fixed. In the context of COVID-19, this includes non-pharmaceutical interventions such as social distancing \autocite{kissler2020social}, as well as some pharmaceutical ones, such as antivirals that might reduce virus shedding.

We permit interventions to operate on $\beta$ only for a limited duration of time, $\tau$. We impose this constraint in light of political, social, and economic impediments to maintaining aggressive intervention indefinitely. That is, we treat costs of intervention implicitly, subsuming them in $\tau$. We describe such an intervention by defining a transmission reduction function $b(t)$ such that:

\begin{equation}
\label{eqn:intervention-sir}
    \begin{aligned}
        \dv{S}{t} &= -b(t) *\beta SI\\
        \dv{I}{t} &= b(t) *\beta SI - \gamma I\\
        \dv{R}{t} &= \gamma I
    \end{aligned}
\end{equation}

If the intervention is initiated at some time $t = t_i$, it must stop at time $t = t_i + \tau$. So necessarily $b(t) = 1$ if $t < t_i$ or $t > t_i + \tau$. During the intervention (i.e.~when $t_i \le t \le t_i + \tau$), $b(t)$ is an arbitrary function, possibly discontinuous, with range in $[0,1]$. This restriction assumes that we cannot intervene increase transmission above what occurs in the absence of intervention, but also optimistically assumes that $b(t)$ can be adjusted instantaneously, and that the effective reproduction number \Reff{} can be reduced all the way to zero, at least for a limited time. 

\subsection*{The optimal intervention}
We pose the following optimization problem: given the epidemiological parameters $\Ro$ and $\gamma$ and the finite duration $\tau$, what is the optimal intervention $b(t)$ that minimizes the epidemic peak \Imax? The optimal intervention is of interest for two reasons. It provides a reference point for evaluating alternative interventions. And it will allow us to analyze the inherent risks and shortcomings of time-limited interventions, even in the best-case scenario.

We prove (\snref{sec:supp-theorems}, Theorem \ref{theorem:maintain-then-suppress}) that for any $\Ro$, recovery rate $\gamma$, and duration $\tau$, there is a unique globally optimal intervention $b(t)$ that starts at an optimal time \tiopt{} and is given by:

\begin{equation}
\begin{aligned}
\label{eqn:optimal}
\bopt(t)=  \begin{cases}
    \frac{\gamma}{\beta S}, & t\in[t_{i},t_{i}+f\tau) \\
	0, & t\in[t_{i}+f\tau,t_{i}+\tau] \\
  \end{cases} 
\end{aligned}
\end{equation}

Where $f$ takes a specific value in $[0,1]$. This solution says that the optimal strategy is to maintain and then suppress. The intervention spends a fraction $f$ of the total duration $\tau$ in a maintain phase, with $b(t)$ chosen so that $\Reff = 1$. This maintains the epidemic at a constant number of infectious individuals equal to $I(\tiopt)$, while susceptibles are depleted at a rate $\gamma I(\tiopt)$. The intervention then spends the remaining fraction $1 - f$ of the total duration in a suppress phase, setting $\Reff = 0$ so that infectious individuals are depleted at a rate $\gamma I$ (Fig.~\ref{fig:intervention}a,d). 

The effectiveness of the optimal intervention, when it should commence, and the balance between maintenance and suppression all depend on the total allowed duration of the intervention, $\tau$. The longer we can intervene (larger $\tau$), the more we can reduce the peak (smaller $\Imax$), the earlier we should optimally act (smaller \tiopt), and the longer we should spend maintaining versus suppressing (larger $f$) (Fig.~\ref{fig:intervention}a,g,h, \snref{sec:supp-theorems}, Theorem \ref{theorem:maintain-then-suppress}, Lemma \ref{lemma:lots-of-powder-shoot-early}, and Corollary \ref{corr:bunker-hill}). Notably, there is a critical duration $\taucrit$ such that if the intervention is shorter than $\taucrit$, then the optimal strategy is to suppress for the entire duration of the intervention (Fig.~\ref{fig:intervention}g,h, Supplementary Fig.~\ref{fig:f-sigma-of-tau}, \snref{sec:supp-theorems}, Theorem \ref{theorem:always-suppress}). 

\begin{figure}[ht]
    \centering
\makebox[\textwidth][c]{    \includegraphics[width=7.5in]{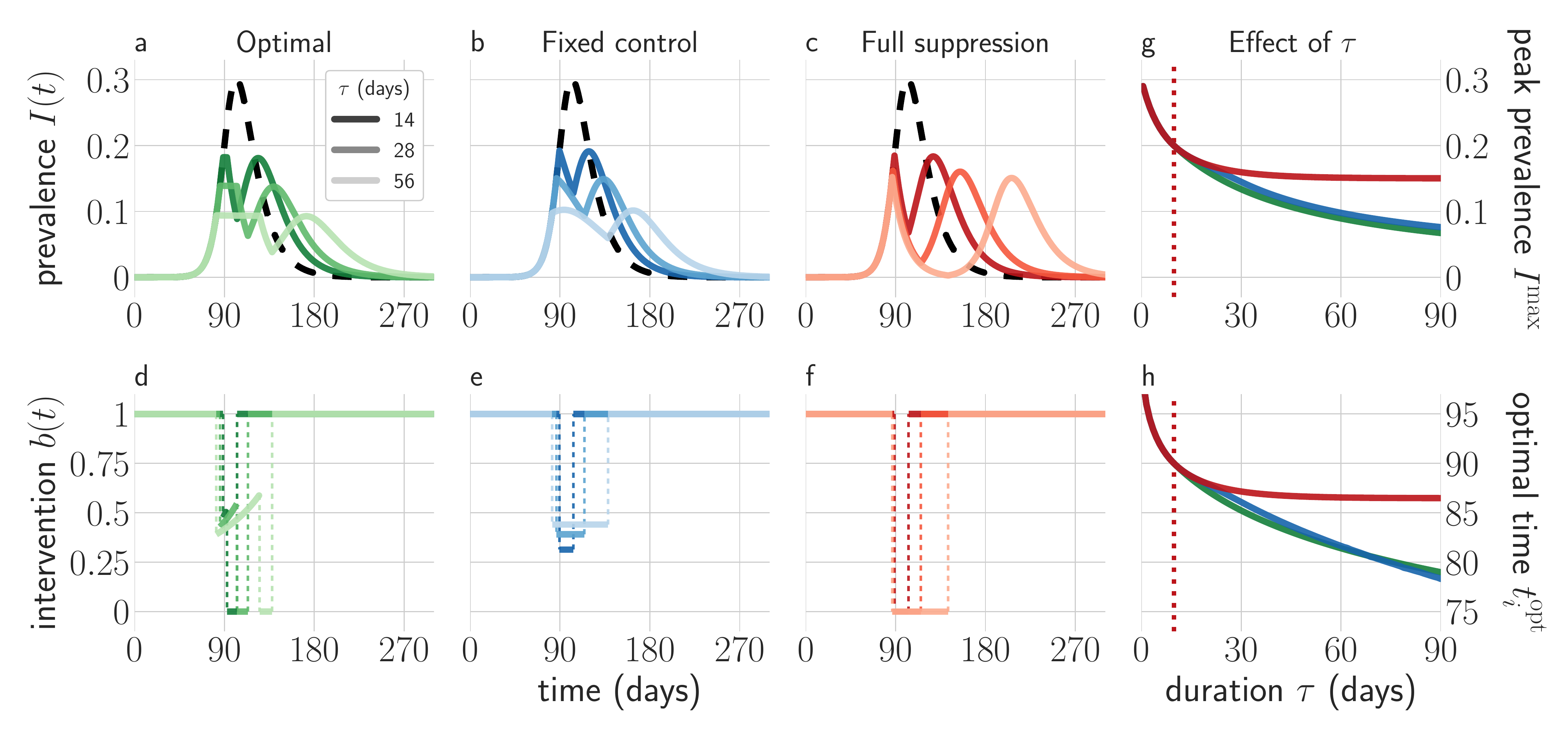}}
\autocaption{Interventions reduce peak infection prevalence.}{\textbf{a--f}, Timecourses of epidemics under optimal (\textbf{a}), fixed control (\textbf{b}), and full suppression (\textbf{c}) interventions with three different values of intervention duration $\tau$: 14 days (dark lines), 28 days (intermediate lines), and 56 days (light lines), and their respective intervention functions $b(t)$ (\textbf{d}--\textbf{f}). At time $t$, the basic reproduction number \Ro{} is reduced to some fraction of its maximum (no intervention) value; $b(t)$ gives that fraction. Dashed black line shows timecourse in the absence of intervention. \textbf{g, h}, Effect of the duration $\tau$ on the infectious prevalence peak (\textbf{g}) and intervention starting times (\textbf{h}) for the optimal intervention (green), the optimized fixed control intervention (blue), and the optimized full suppression intervention (red). Dotted red line shows the critical value $\taucrit$ below which full suppression is the globally optimal intervention. Plotted parameters: $\Ro = 3$, $\gamma = \frac{1}{14}$ days$^{-1}$. See Table \ref{tab:default-parameters} for parameter justifications.}
\label{fig:intervention}
\end{figure}
\clearpage

\subsection*{Near-optimal interventions}
Although theoretically enlightening, the optimal intervention described above is not feasible in practice. Implementing it would require policies flexible enough to fine-tune transmission rates continuously, imposing ever changing social behaviors. It would also require instantaneous and perfect information about the current state of the epidemic in the population, information that will clearly not be available even with greatly improved testing and contact-tracing. 

We therefore also consider other families of potential interventions, and study how they perform compared to the optimal intervention.

Real-world interventions typically consist of simple rules that are fixed for some period of time (quarantines, restaurant closures, physical distancing). We model such fixed control strategies (previously investigated by others \autocite{dilauro2020timing,hollingsworth2011mitigation}) as interventions of the form:

\begin{equation}
\begin{aligned}
\label{eqn:fixed_control}
b_\text{fix}(t)= \sigma \text{ for } & t\in[t_{i},t_{i}+\tau] \\
\end{aligned}
\end{equation}

Fixed control interventions are determined by two parameters: the starting time $t_i$ and the strictness $\sigma \in[0,1]$. For any intervention duration $\tau$ we can numerically optimize $t_i$ and $\sigma$ to minimize peak prevalence $\Imax$ (Fig.~\ref{fig:intervention}b,e)

For a given $\Ro$, $\gamma$, and $\tau$, an optimized fixed control intervention yields an epidemic time course that is remarkably similar to the one obtained under the globally optimal intervention strategy (Fig.~\ref{fig:intervention}a,b). Peak prevalence \Imax{} is only slightly lower in the optimal intervention than in an optimized fixed control intervention (Fig.~\ref{fig:intervention}a,b,g). The effectiveness of a fixed control intervention depends on $\tau$ in a similar manner to that of the optimal intervention: longer interventions are more effective, should start earlier, and are less strict (Fig.~\ref{fig:intervention}e,g,h).

The similarities between fixed control interventions and optimal interventions can be understood by considering the time course of $\Reff$ during the intervention. At first, the fixed control intervention mainly depletes the susceptible fraction. As susceptible hosts are depleted, $\Reff$ falls, so the intervention naturally begins to deplete the infectious fraction. And so fixed control interventions are qualitatively similar to the maintain, then suppress optimal intervention. Optimizing a fixed control strategy has the effect of choosing a $\sigma$ and $t_i$ that emulate---and thus perform nearly as well as---an optimal intervention (Fig.~\ref{fig:intervention}a,b,g).

As $\tau$ becomes small, the optimal $\sigma$ for a fixed control intervention also becomes small, mimicking the suppression phase of the optimal intervention. For small enough $\tau$, the optimal $\sigma$ is equal to 0: the intervention consists entirely of suppression (Supplementary Fig.~\ref{fig:f-sigma-of-tau}a). As $\tau$ increases, the optimal $\sigma$ also increases, producing interventions with longer and longer susceptible-depleting maintenance-like periods (Supplementary Fig.~\ref{fig:f-sigma-of-tau}a). The epidemic trajectory becomes increasingly flat, eventually approximating a pure maintenance intervention. Note that achieving this approximate flatness with a fixed control intervention requires allowing some initial growth of the infected class ($\Reff > 1$ at $t_i$), albeit at a reduced rate compared to no intervention. Growth of the infectious class never occurs during optimal interventions.

We also analyze a third class of interventions: the full suppression interventions defined by

\begin{equation}
\begin{aligned}
\label{eqn:full_supression}
b_{0}(t)= 0 \text{,  } & t\in[t_{i},t_{i}+\tau]. \\
\end{aligned}
\end{equation}

These interventions emulate extremely strict quarantines (Fig.~\ref{fig:intervention}c,f). They are characterized by the complete absence of susceptible depletion. Such interventions are fully determined by the starting time $t_i$, which can be optimized given the total allowable duration $\tau$. 

Note that full suppression interventions are a limiting case both of maintain-suppress interventions (with no maintenance phase, $f = 0$) and of fixed control interventions (with maximal strictness, $\sigma=0$). Accordingly, the optimized full suppression intervention performs similarly to the optimal intervention and to the optimized fixed control intervention for short durations, when those favor a relatively short maintenance phase and high strictness (Fig.~\ref{fig:intervention}c). For longer interventions, the effectiveness of full suppression rapidly plateaus (Fig.~\ref{fig:intervention}g, \snref{sec:supp-theorems}, Corollary \ref{corr:full_supression_optima}). Accordingly, the optimal time to initiate a full suppression intervention plateaus with increasing $\tau$: there is no benefit in fully suppressing too early (Fig.~\ref{fig:intervention}h, \snref{sec:supp-theorems}, Corollary \ref{corr:full_supression_optima}).

Taken together, these results show that the most efficient way for long interventions to decrease peak prevalence \Imax{} is to cause susceptible depletion while limiting how much the number of infectious individuals can grow. For short interventions, by contrast, it is most efficient simply to reduce the number of infectious individuals. 
Optimizing an intervention trades off cases now against cases later. We prove (\snref{sec:supp-theorems}, Theorems \ref{theorem:twin-peaks}, \ref{theorem:twin-peaks-fixed}) that optimal and near-optimal interventions cause the epidemic to achieve the peak prevalence exactly twice: once during the intervention and once strictly afterward. All optimized maintain-suppress interventions of duration $\tau$ and maintenance fraction $f$ have this twin-peak property, including the globally optimal intervention, the optimized full suppression intervention, and also the optimized fixed control intervention for any duration $\tau$. Note that this means that the optimized interventions always end before herd immunity is reached. In practice this means that if an intervention continues until herd immunity is reached, then an earlier intervention of the same duration could have produced a lower epidemic peak while permitting a small rebound. The absence of a second peak is not a sign of policy success; it is a sign that policymakers acted too late.

\subsection*{Mistimed interventions}\label{sec:mistimed}
The optimal and near-optimal interventions are extremely powerful. For COVID-like epidemic parameters, the 28-day optimal or fixed control interventions reduce peak prevalence from about 30\% of the population to under 15\%. Even the full suppression intervention reduces peak prevalence to well under 20\%. These are massive and potentially health system-saving reductions.

In practice, however, interventions are not automatically triggered at a certain number of infectious individuals or at a certain point in time. They are introduced by policymakers, who must estimate the current quantity of infectious individuals $I(t)$, often from very limited data, must begin roll-out with an uncertain period of preparation, and must also estimate the epidemiological parameters $\Ro$ and $\gamma$. These tasks are difficult, and so policymakers may fail to intervene at the optimal moment \tiopt{}. We consider timing errors in both directions, though in practice acting late might be more common than acting early.

How costly is mistiming a time-limited intervention? We find that even a single week of separation between the time of intervention and $\tiopt$ can be enormously costly, for realistic COVID-19 epidemic parameters (Fig.~\ref{fig:mistimed}a--c, g--i).

\begin{figure}[ht]
    \centering
    \makebox[\textwidth][c]{
    \includegraphics[width=\textwidth]{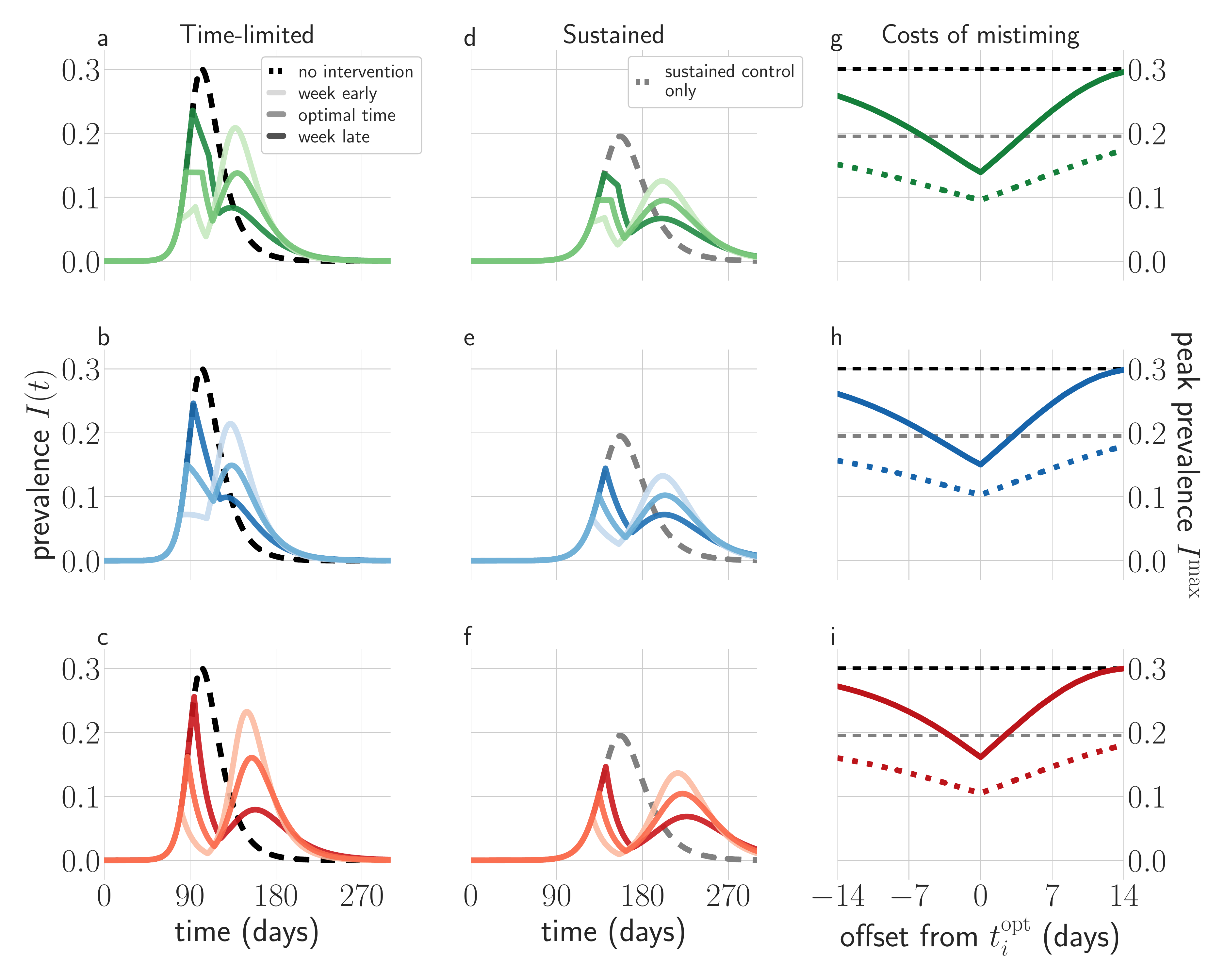}}
    \autocaption{Mistiming an intervention reduces its effectiveness.}{\textbf{a-f,} Timecourses of epidemics under optimal \textbf{a,} fixed control \textbf{b,} and full suppression \textbf{c,} interventions that are possibly mistimed: a week late (dark lines), optimally timed (intermediate lines), and a week early (light lines). Dashed black line shows timecourse in the absence of intervention. \textbf{d-f,} Timecourses of epidemics with a sustained control that reduces the basic reproduction number $\Ro$ by $25$\%, combined with the effects of a possibly mistimed optimal \textbf{d,} fixed control \textbf{b,} and full suppression \textbf{f} interventions, with line lightness as before. Dashed grey line shows timecourse with only sustained control and no additional intervention. \textbf{g-i,} Effect of offset of intervention time $t_i$ from optimal intervention time $\tiopt{}$ on epidemic peak prevalence \Imax{} without (solid lines) and with (dotted lines) sustained control for optimal \textbf{g,} fixed control \textbf{h,} and full suppression \textbf{i} interventions. Dashed black and grey lines show \Imax{} in the absence of intervention, without and with sustained control, respectively. Plotted parameters: $\Ro = 3$, $\gamma = \frac{1}{14}$ days$^{-1}$. See Table \ref{tab:default-parameters} for parameter justifications.}
    \label{fig:mistimed}
\end{figure}

While the optimal intervention achieves a dramatic reduction in the height of the peak, mistiming such an intervention can be disastrous. Intervening too early produces a resurgent peak, but it is even worse to intervene too late. For example, if the intervention is initiated one week later than the optimal time, then \Imax{} is barely reduced compared to the absence of any intervention whatsoever, particularly for a full suppression intervention (Fig.~\ref{fig:mistimed}a--c,g--i).

The extreme costs of mistiming arise from the steepness of the $I(t)$ curve at \tiopt. Optimized interventions permit some cases now in order to reduce cases later. Both infectious depletion and susceptible depletion require currently infectious individuals in order to be effective at reducing peak prevalence. This means that, except for interventions of very long duration $\tau$, the optimal start time $\tiopt{}$ occurs during a period of rapid, near-exponential growth in the fraction infectious $I(t)$. 

The practical problem with this approach is intuitive: because $S(t)$ and $I(t)$ are so steep at $S(\tiopt{}), I(\tiopt{})$, small errors in timing produce large errors in terms of $S(t_i), I(t_i)$ (Fig.~\ref{fig:mistimed}a--c). Indeed, for epidemics that have faster dynamics, the consequences of mistiming interventions are increasingly stark (Fig.~\ref{fig:mistimed}a--f, Fig.~\ref{fig:parameter-sweep}b). 

It is also clear why being late is costlier than being early (Fig.~\ref{fig:mistimed}g--i, Fig.~\ref{fig:parameter-sweep}a--d). An early intervention is followed by a large resurgent second peak, but the resurgence is slower and smaller than the uncontrolled initial surge permitted by a late intervention, thanks to the susceptible depletion that occurs during the early intervention (Fig.~\ref{fig:mistimed}a--c,g--i, Fig.~\ref{fig:parameter-sweep}a--d).

Importantly, \snref{sec:supp-theorems} Theorems \ref{theorem:twin-peaks} and \ref{theorem:twin-peaks-fixed} imply that late implementation of an optimized intervention results in a maximal epidemic peak at the time the intervention is initiated; whereas early implementation postpones the maximal peak (Fig.~\ref{fig:mistimed}a--c). In practice this means that early interventions allow for course corrections.

\begin{figure}[ht]
    \centering
    \makebox[\textwidth][c]{
    \includegraphics[width=\textwidth]{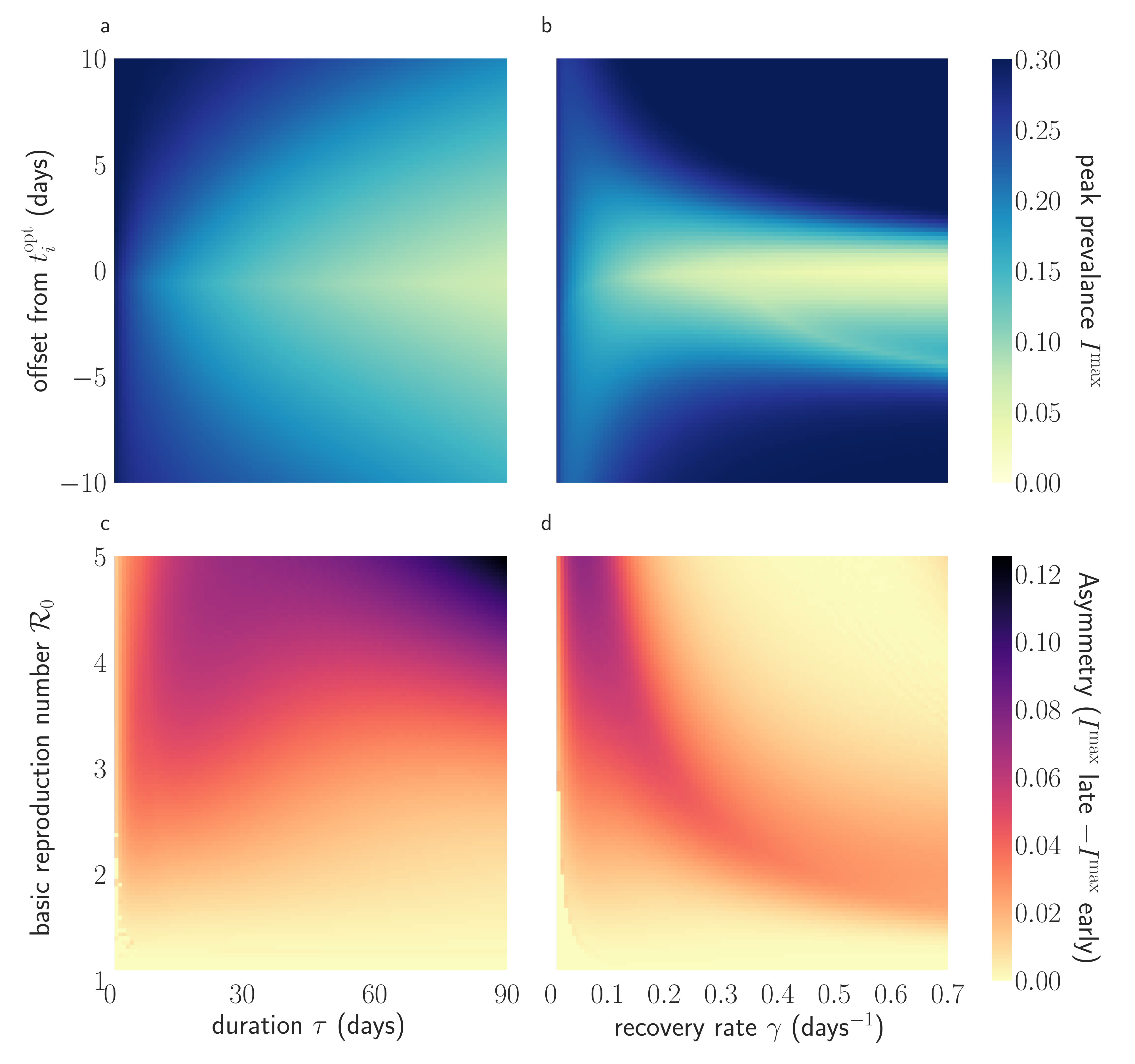}}
    \autocaption{Effect of parameter variation on peak reduction and robustness to mistiming.}{\textbf{a, b}, Peak prevalence $\Imax$ as a function of offset from optimal intervention initiation time $\tiopt{}$ and (\textbf{a}) intervention duration $\tau$, (\textbf{b}) recovery rate $\gamma$. \textbf{c, d}, Asymmetry between intervening early and intervening late, quantified as \Imax{} for a 7-day late intervention minus \Imax{} for a 7-day early intervention, as a function of the basic reproduction number $\Ro$ and (\textbf{c}) duration $\tau$, (\textbf{d}) recovery rate $\gamma$. Unless otherwise stated, $\Ro = 3$, $\gamma = \frac{1}{14}$ days$^{-1}$, $\tau = 28$ days. See Table \ref{tab:default-parameters} for parameter justifications.}
    \label{fig:parameter-sweep}
\end{figure}
\clearpage

\subsection*{Sustained interventions}
Are there any sustained measures that can improve the robustness of optimized interventions? Because the severity of mistiming is governed by the steepness of $I(t)$, measures that reduce steepness should alleviate the impact of mistiming. We propose using weak measures of long duration to achieve this desired outcome. Even though these measures by themselves may have little effect on $\Imax$, they can buffer timing mistakes when used in combination with stronger, time-limited interventions.

We consider sustained weak interventions, modeled as a constant reduction of $\Ro$ throughout the entire epidemic, both on their own and combined with optimized time-limited interventions (Fig.~\ref{fig:mistimed}d--f). If perfectly timed, an optimized time-limited intervention outperforms a sustained intervention. Moreover, adding a sustained intervention to a perfectly-timed time-limited intervention provides little extra benefit (Fig.~\ref{fig:mistimed}g--i). However, even a slightly mistimed time-limited intervention is far worse than a sustained intervention. And, most importantly, if both sustained and time-limited interventions are adopted, the time-sensitivity of the time-limited intervention is reduced. In particular, the otherwise disastrous cost of intervening too late is reduced (Fig.~\ref{fig:mistimed}d--i).
\clearpage
\section*{Discussion}
COVID-19 has thrown into relief the importance of flattening the curve. Our analysis establishes the optimal strategy to minimize peak prevalence in the SIR epidemic model, given an intervention of limited duration. Simpler interventions can closely approximate the optimal outcome.  

Deriving the optimal strategy highlights the fundamental materials---susceptible depletion and infectious depletion---of any epidemic mitigation strategy, and it provides a yardstick against which to measure all other strategies.

But it would be unwise to attempt an optimal or near-optimal intervention in practice. The inevitable errors in timing that arise from uncertainty in inference and delays in implementation will produce disaster.

It is particularly costly to act too late. This causes an elevated peak immediately before the intervention even starts. By contrast, a premature intervention leads to a substantially delayed second wave \autocite{leung2020first} after the relaxation of controls. Such a delay may sometimes be more desirable than peak reduction itself. A second wave is less challenging to control and manage than a first wave: healthcare capacity can be increased in the interim, pharmaceutical interventions such as antivirals may become available, epidemiological parameters will be better known, and accumulated population immunity will reduce exponential growth rate even in the absence of intervention. Moreover, our results apply to any interventions that policymakers might be able to take during a second wave. 

Our analysis has many limitations, and it leaves a large body of important questions unresolved. We have generously assumed that policymakers possess complete information about epidemic parameters and about the current epidemic state. Future studies can work to address the problem of disease control under uncertainty, by coupling an understanding of optimal interventions with an analysis of epidemiological inference. Errors of inference are likely to be magnified by subsequent intervention mistiming. Moreover, our analysis makes the unrealistic assumption that intervention strength $b(t)$ can be tuned at will. But in practice $b(t)$ can be tuned only coarsely, and even fixed control interventions are not truly enforceable in their idealized form.

We have on time-limited interventions. This time limit captures the social and economic costs of intervention. It also reflects the possibility that non-compliance with measures may rise over time, a concern that some policymakers had when planning COVID-19 responses \autocite{harvey2020behavioral}.

We have focused our analysis on the epidemic peak. This quantity is critical because it is the point at which health services will be most strained. An overwhelmed health system can dramatically increase infection-fatality rates, direct morbidity, and medical complications \autocite{kissler2020social,ferguson2020impact}. Peak epidemic prevalence is a good proxy for demands on the healthcare system, but an even better metric is the total person-days with prevalence exceeding the maximum healthcare capacity. If the peak prevalence cannot be reduced below this capacity, then the strategy that minimizes the peak will not necessarily minimize the cumulative impact above capacity. It might be preferable to permit a slightly higher peak, and then move to full suppression. 

A policymaker also seeks to reduce the total cases during the epidemic. But for an explosively spreading novel pathogen, this consideration may be secondary to reducing peak prevalence and avoiding healthcare system collapse. Moreover, interventions that reduce the epidemic peak almost necessarily reduce the total case count (or final size) of the epidemic, though they may not do so as efficiently as interventions targeted directly at final size reduction \autocite{dilauro2020timing}. A policymaker may also wish to delay the epidemic peak to allow time for healthcare capacity to build and pharmaceuticals to be developed; this too produces different policy trade-offs \autocite{dilauro2020timing}.

In general, however, when healthcare capacity is larger, the problem of minimizing the cumulative impact above capacity becomes almost identical to the problem of minimizing peak prevalence. Whereas when capacity is smaller, the problem of minimizing cumulative impact becomes the same as minimizing the final size of the outbreak \autocite{dilauro2020timing}. 

We use one of the simplest possible models of disease transmission: the SIR model in a homogeneously mixing population, and in the large population limit in which differential equations are appropriate. This is by design. We show that even in a simple setting, without the other confounding factors of real world disease spread, such as population structure, stochasticity, time varying transmission rates, or partial immunity, time-limited non-pharmaceutical control is not robust to implementation error.

Had our model included those complicating factors, one might have conjectured that the danger of mistiming is linked to those modeled factors chosen, and could therefore be controlled by carefully accounting for them. Had we studied an intervention that was not provably optimal, one could have argued that a more optimal intervention would not carry such risks. Instead, the simplicity of our model and the provable optimality of the strategies studied demonstrate that these risks of mistiming are a fundamental feature of the initial exponential growth of an epidemic, regardless of the optimality of the intervention or the putative realism of the model used.

That said, time-limited non-pharmaceutical control in less idealized circumstances is worthy of study. di Lauro and colleagues provide a numerical treatment of a metapopulation of internally well-mixed SIR demes \autocite{dilauro2020timing}. Countries such as Vietnam \autocite{lee2020should}, Taiwan \autocite{summers2020potential}, and New Zealand \autocite{baker2020successful} successfully adopted non-pharmaceutical strategies aimed at eliminating of SARS-CoV-2 transmission locally and then suppressing reintroductions. Metapopulation models could reveal the conditions under which a local elimination with a time-limited intervention is viable and robust. Similarly, heterogeneities in a single population---for instance in individual susceptibility or in degree of social connectedness---can alter disease dynamics \autocite{miller2009spread,volz2011effects}. Future studies could assess the impact of realistic network topologies on the optimality and robustness of interventions.

Still, our simple analysis offers several clear, practical principles for policymakers. First, act early. There is a striking asymmetry in the costs of acting too early versus too late. Second, work to slow things down. Slowing the growth curve makes interventions more robust, and also makes inference of epidemiological parameters more accurate. Third, when in doubt, bear down. Even the crude policy of full suppression is remarkably successful at reducing peaks and delaying excess prevalence. And if policymakers are very late to act then full suppression is, in fact, optimal. 

Naive optimization is dangerous. Real-world policy must emphasize robustness, not efficiency.

\clearpage
\section*{Methods}

\subsection*{Model parameters}
Table \ref{tab:default-parameters}, below, gives definitions of model parameters, their units, default values plotted in figures unless otherwise stated, and justifications for those choices.

\begin{table}[h]
\centering
\scriptsize
\renewcommand{\arraystretch}{2}
\caption{Model parameters, default values, and sources/justifications}
\begin{tabular}{ l p{3.5cm} l l  p{4cm}}
Parameter & Meaning & Units & Value & Source or justification \\
\hline
$\Ro$ & basic reproduction number & unitless & 3 & Estimates for COVID range from 2 to 3.5 \autocite{li2020substantial,park2020reconciling}\\
$\gamma$ & recovery rate &$1/$days & $\frac{1}{14}$ & Infectious period for COVID of approximately 1--2 weeks \autocite{zou2020sars} \\
$\beta$ & rate of disease-causing contact &$1/$days & $\Ro \gamma$ & calculated  \\
$\tau$ & duration of a time-limited intervention & days & 28 & Approximately one month \\
\hline
\end{tabular}
\label{tab:default-parameters}
\end{table}
\clearpage
\subsection*{State-tuned and time-tuned maintain-suppress interventions}
When we ask what it means for a maintain-suppress intervention to be mistimed, we need a model of how the intervention is implemented. One possibility is that the policymaker directly observes $S(t)$ throughout the intervention and chooses $b(t) = \frac{\gamma}{\beta S(t)}$ during the maintenance phase based on the directly observed $S(t)$. We call this a state-tuned intervention.

Alternatively, the policymaker plans to intervene at some $S(t)$ value $S_i$ predicted to occur at a time $t_i$. The policymaker knows that during a successful maintenance phase, $S(t) = S_i - \gamma I_i (t - t_i)$. The policymaker then chooses the maintenance phase values of $b(t)$ according to this predicted $S(t)$. We call this a time-tuned intervention. 

When we study mistimed interventions, we use time-tuned interventions. Since instantaneous epidemiological observation is not possible, time-tuned interventions are a more realistic model of how a maintain-suppress intervention, if possible at all, would in fact be implemented. If instantaneous epidemiological observation were possible during interventions, timing errors could be mitigated better than either state-tuning or time-tuning allows. Policymakers could observe the true $(S_i, I_i)$ at the moment of intervention $t_i$ and then employ whichever intervention of duration $\tau$ is optimal given that it begins at $(S_i, I_i)$.

Indeed, it can be seen that time-tuned interventions are in fact slightly more robust to mistiming than state-tuned interventions. They are partially self-correcting where the state-tuned interventions are not. 

Late interventions have $\Imax = I_i$. But since a late time-tuned intervention has higher initial strictness than a late state-tuned intervention, it avoids unnecessary time spent at $\Imax = I_i$.

Early time-tuned interventions achieve lower \Imax{} than equivalent early state-tuned interventions. This is true even---in fact, especially---for very fast epidemics. During the maintain phase of a maintain-suppress intervention, the strictness decreases in time (see Supplementary Note \ref{sec:supp-theorems}, equation \ref{eqn:bopt_time}). If the intervention is too early, this allows for some initial growth of the infectious fraction and thus improved depletion of the susceptible fraction (Fig.~\ref{fig:mistimed}a,d) relative to maintaining at $I_i$ (as in a state-tuned intervention). The ensuing second peak is therefore reduced. 

An intriguing side effect of this automatic course-correction is that for relatively fast epidemics, some premature interventions outperform others that are less early (note the branching in Fig.~\ref{fig:parameter-sweep}b).

\printbibliography[section = 1]
\section*{Acknowledgements}
We thank Ada W.\ Yan, Amandine Gamble, Corina E.\ Tarnita, Elizabeth N.\ Blackmore, James O.\ Lloyd-Smith, and Judith Miller for helpful comments on previous versions of this work. We thank Juan Bonachela for helpful discussions. DHM and SAL gratefully acknowledge financial support from NSF grant CCF 1917819 and a C3.ai DTI Research Award from C3.ai Inc.\ and Microsoft Corporation. SAL gratefully acknowledges financial support in the form of a gift from Google, LLC.\ for work on COVID-19.

\section*{Author contributions}
DHM conceived the study. DHM and FWR designed and analyzed the mathematical model, with support and proof verification from JBP and SAL. FWR proved key theorems, with support from DHM. DHM conducted numerical analysis and produced figures, with support from FWR. DHM, FWR, and JBP wrote the first draft of the manuscript, which all authors edited.

\section*{Competing interests}
We have no competing interests to declare.

\section*{Additional information}

\subsection*{Supplementary Information}
Supplementary information for this paper is available online.

\subsection*{Correspondence}
Correspondence and requests for materials should be addressed to \href{mailto:dylan@dylanhmorris.com}{Dylan H. Morris} and \href{mailto:frossine@princeton.edu}{Fernando W. Rossine}.

\subsection*{Data availability}
Data sharing is not applicable to this article as no datasets were generated or analysed during the current study. Code to reproduce numerical model analysis is provided (see Code availability below).

\subsection*{Code availability}\label{sec:code-avail}
All code needed to reproduce numerical results and figures is archived on Github (\githuburl) and on OSF (\osfurl), and licensed for reuse, with appropriate attribution/citation, under a BSD 3-Clause Revised License.

We wrote numerical analysis and figure generation code in Python 3 \autocite{python}, using numerical solvers provided in NumPy \autocite{numpy} and SciPy \autocite{scipy}, and produced figures using Matplotlib \autocite{matplotlib}. Parameter choices for numerical analysis are stated in figure captions and in Supplementary Table \ref{tab:default-parameters}.
\end{refsection}

\begin{refsection}
\clearpage
\appendix 

\setcounter{figure}{0}
\renewcommand{\thefigure}{\arabic{figure}}

\captionsetup[figure]{
labelfont={bf}, 
labelformat = mid,
labelsep = none,
name = {Supplementary Fig.}}

\renewcommand{\thesection}{\arabic{section}}

\renewcommand{\thesubsection}{\arabic{section}.\arabic{subsection}}

\setcounter{section}{0}
\setcounter{page}{1}

\title{\textbf{\SupplementNameFull{} for}\\ \papertitle}

\maketitle

\clearpage

\section*{Supplementary Figures}
\begin{figure}[ht]
    \centering
    \makebox[\textwidth][c]{
    \includegraphics[height=0.5\textheight]{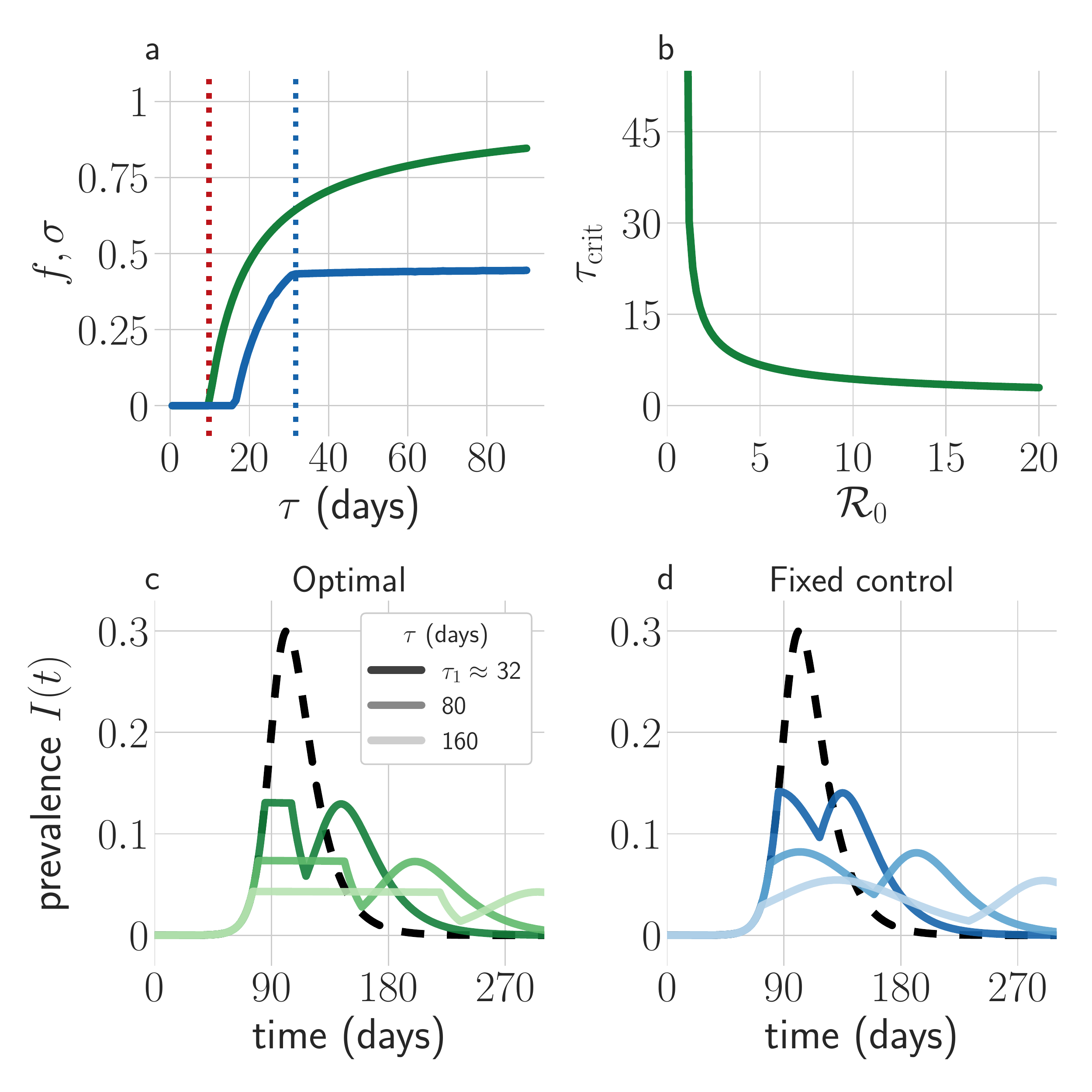}}
    \autocaption{Effect of $\tau$ and $\Ro$ on the optimal and optimized fixed control strategies.}{\textbf{a,} Optimal maintenance fraction $f$ for the optimal strategy (green line) and optimized strictness $\sigma$ for the fixed control intervention (blue line) as a function of intervention duration $\tau$. Red dotted vertical line shows the critical value $\taucrit$ below which full suppression ($f=0$) is the globally optimal intervention. Blue dotted vertical line shows $\tau_1$, the point at which $\Reff = 1$ at the start of the optimized fixed control intervention. \textbf{b,} Full suppression critical value $\taucrit$ as a function of basic reproduction number \Ro. \textbf{c, d,} Example timecourses for $\tau \ge \tau_1$, under the (\textbf{c}) optimal and (\textbf{d}) fixed control interventions. Dashed black line shows timecourse in the absence of intervention. Parameters as in Table \ref{tab:default-parameters} of the main text unless otherwise stated.}
    \label{fig:f-sigma-of-tau}
\end{figure}
\clearpage
\section{Supplementary Note 1: Theorems and Proofs}\label{sec:supp-theorems}

\subsection{Useful notation}
We define \Scrit{} for an SIR model to be the critical fraction susceptible at which $\Reff = 1$ and $\dv{I}{t} = 0$ (in the absence of intervention), i.e. $\Scrit \equiv \frac{1}{\Ro}$. 

\subsection{Peak prevalence  in an SIR model}
Define $\Imax(t)$ for an SIR system as the maximum value of $I(x)$ achieved during the interval $x \in [t, \infty)$. Notice that $\Imax(0) = \Imax$, where $\Imax$ is the global maximum value of $I(x)$, which we are seeking to minimize with our intervention.

A known result that is immediate from the original work of Kermack and McKendrick \autocite{kermack1927contribution} (see Weiss \autocite{weiss2013sir} for an explicit derivation) holds that for $S(t) \ge \Scrit$:
\begin{equation}
\label{eqn:Imax-t}
\Imax(t) = 
I(t) + S(t) - \frac{1}{\Ro} \log\Big(S(t)\Big) - \frac{1}{\Ro} + \frac{1}{\Ro} \log\Big(\frac{1}{\Ro}\Big)
\end{equation}

\begin{remark}[Continuity of $\Imax$]
\label{remark:continuity-of-imax-t}
$\Imax$ is a continuous function $\Imax: (0, 1]^2 \mapsto [0, 1]$ of $v = (S(t), I(t))$ for $ v \in [\Scrit, 1] \cross (0, 1]$, and therefore also a continuous function of $t$, $\Imax: \mathbb{R} \mapsto [0, 1]$ for $t \in (-\infty, t_\text{crit}]$, where $t_\text{crit}$ is the time such that $S(t_\text{crit}) = \Scrit$ \end{remark} 
This is immediate from the fact that $\Imax(t)$ is a linear combination of univariate functions of $S(t)$ and $I(t)$ that are themselves continuous on $[\Scrit, 1]$ and $[0, 1]$, respectively. And since $S(t)$ and $I(t)$ are continuous functions of $t$, \Imax{} is a continuous function of $t$ for $t \in (-\infty, t_\text{crit}]$.

\begin{lemma}[The more fire, the bigger the blaze]
\label{lemma:fire-fire}
If $0 \le I_x(t_x) \le I_y(t_y)$ and $S_x(t_x) = S_y(t_y)$ for two SIR systems $x$ and $y$ with identical parameters \Ro{} and $\gamma$ at possibly distinct times $t_x, t_y \in [0, \infty]$ then $\Imax_x(t_x) \le \Imax_y(t_y)$, with equality only if $I_x(t_x) = I_y(t_y)$.
\end{lemma}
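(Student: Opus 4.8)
The plan is to reduce the claim to the closed-form expression for $\Imax(t)$ recorded in equation~\ref{eqn:Imax-t}, using the fact (emphasized in Remark~\ref{remark:continuity-of-imax-t}) that $\Imax(t)$ is a function of the current state $(S(t),I(t))$ alone: $\Imax_x(t_x)$ is just the peak of the uncontrolled SIR trajectory launched from $\big(S_x(t_x),I_x(t_x)\big)$, and similarly for $y$. Write $S_\ast := S_x(t_x) = S_y(t_y)$ for the common susceptible value, and split into two regimes according to the position of $S_\ast$ relative to $\Scrit$.

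First, suppose $S_\ast \ge \Scrit$. Then equation~\ref{eqn:Imax-t} applies to both systems, and since $x$ and $y$ share the parameters $\Ro$ and $\gamma$ and have the same susceptible fraction $S_\ast$, every term on the right-hand side of~\ref{eqn:Imax-t} except the leading $I(t)$ is identical for the two systems. Hence $\Imax_y(t_y) - \Imax_x(t_x) = I_y(t_y) - I_x(t_x)$, which is $\ge 0$ by hypothesis, with equality exactly when $I_x(t_x) = I_y(t_y)$. That is the entire content in this regime.

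Second, suppose $S_\ast < \Scrit$. Along the uncontrolled SIR dynamics starting from either state, $S$ is non-increasing and $\dv{I}{t} = \gamma I(\Ro S - 1) < 0$ as long as $S < \Scrit = \frac{1}{\Ro}$; since $S$ only decreases, this inequality persists for all later times, so $I$ is strictly decreasing thereafter. Consequently $\Imax_x(t_x) = I_x(t_x)$ and $\Imax_y(t_y) = I_y(t_y)$, and both the inequality and its equality case follow directly from $0 \le I_x(t_x) \le I_y(t_y)$. The borderline case $S_\ast = \Scrit$ is consistent with either branch: substituting $S = \frac{1}{\Ro}$ into~\ref{eqn:Imax-t} returns $\Imax(t) = I(t)$.

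I do not expect a genuine obstacle: morally the statement just says that, as a function of the current state, $\Imax$ is strictly increasing in $I$ at fixed $S$, which is transparent from~\ref{eqn:Imax-t}. The only places that call for a little care are (i) making explicit that we are comparing peaks of the \emph{uncontrolled} trajectories emanating from the two states, so that the state-function formula is legitimately applicable, and (ii) treating the post-peak regime $S_\ast < \Scrit$ separately, since there equation~\ref{eqn:Imax-t} is not in force and one instead argues from the monotonicity of $I$.
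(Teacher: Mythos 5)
Your proof is correct and takes essentially the same route as the paper's: split on the position of the common susceptible value relative to \Scrit{} and use that the closed form in equation \ref{eqn:Imax-t} is increasing in $I(t)$ at fixed $S(t)$ (you even treat the borderline $S_\ast = \Scrit$ more explicitly than the paper does). The one difference is that the paper isolates $I_x(t_x)=0$ as a separate case, because equation \ref{eqn:Imax-t} presumes a trajectory that actually runs down to $S=\Scrit$ and strictly overstates $\Imax$ when $I=0$; your first regime technically invokes the formula there, though the conclusion survives since the formula can only overestimate $\Imax_x(t_x)$.
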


\begin{proof}
There are three cases.

\textbf{Case 1:} $I_x(t_x) = 0$. In this case, $\Imax_x(t_x) = 0 \le  I_y(t_y) \le \Imax_y(t_y) $, with equality only if $\Imax_y(t_y) = 0$, which can only occur if $I_y(t_y) = 0$.

\textbf{Case 2:} $S_x(t_x) = S_y(t_y) < \Scrit$. In this case, $\Imax_x(t_x) = I_x(t_x)$ and $\Imax_y(t_y) = I_y(t_y)$, so our result holds.

\textbf{Case 3:} $S_x(t_x) = S_y(t_y) > \Scrit$. In this case, we can apply equation \ref{eqn:Imax-t}. Fixing $S(t) > \Scrit$, $\Imax(t)$ is an increasing function of $I(t)$ (there is a single, positive $I(t)$ term in the sum), so the result must hold.
\end{proof}

\begin{lemma}[The more fuel, the bigger the blaze]
\label{lemma:fuel-fire}
If $I_x(t_x) = I_y(t_y) > 0$ and $S_x(t_y) \le S_y(t_y)$ for two SIR systems $x$ and $y$ with identical parameters \Ro{} and $\gamma$ at possibly distinct times $t_x, t_y \in [0, \infty]$ then $\Imax_{t_x} \le \Imax_{t_y}$, with equality only if $S_x(t_x) = S_y(t_y)$ or $S_y(t_y) \le \Scrit$
\end{lemma}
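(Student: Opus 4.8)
The plan is to mirror the three-case structure of Lemma~\ref{lemma:fire-fire}, splitting on where $S_x(t_x)$ and $S_y(t_y)$ sit relative to $\Scrit$, and — in the regime where equation~\ref{eqn:Imax-t} applies — to package the $S$-dependence of $\Imax$ into a single scalar function whose monotonicity does the work. Throughout I use that the two systems share $\Ro$ and $\gamma$, so the additive constant and the $S$-dependent terms in equation~\ref{eqn:Imax-t} are the \emph{same} function for both.

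First I would treat the case $S_x(t_x) \ge \Scrit$. Since $S_x(t_x) \le S_y(t_y)$, both systems then have $S \ge \Scrit$, so equation~\ref{eqn:Imax-t} applies to each. Writing $g(S) = S - \frac{1}{\Ro}\log(S)$, and using $I_x(t_x) = I_y(t_y)$, the $I$-terms cancel and $\Imax_x(t_x) - \Imax_y(t_y) = g(S_x(t_x)) - g(S_y(t_y))$. Now $g'(S) = 1 - \frac{1}{\Ro S} \ge 0$ on $[\Scrit, \infty) = [1/\Ro, \infty)$, vanishing only at $S = \Scrit$, so $g$ is strictly increasing there; hence $S_x(t_x) \le S_y(t_y)$ gives $\Imax_x(t_x) \le \Imax_y(t_y)$, with equality only when $S_x(t_x) = S_y(t_y)$ — one of the two permitted equality conditions.

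Next I would handle $S_x(t_x) < \Scrit$. Then $x$ is at or past its peak, so $\Imax_x(t_x) = I_x(t_x)$, and it suffices to observe $\Imax_x(t_x) = I_x(t_x) = I_y(t_y) \le \Imax_y(t_y)$. For the equality clause, suppose $\Imax_x(t_x) = \Imax_y(t_y)$; then $\Imax_y(t_y) = I_y(t_y)$, i.e. $I$ in system $y$ never exceeds its current value. But $\dv{I}{t} = \gamma I (\Ro S - 1)$ and $I_y(t_y) > 0$, so if $S_y(t_y) > \Scrit$ then $I$ is strictly increasing at $t_y$ and $\Imax_y(t_y) > I_y(t_y)$, a contradiction; hence $S_y(t_y) \le \Scrit$, the other permitted equality condition.

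The only real care needed — the mild obstacle — is the boundary bookkeeping at $S = \Scrit$: justifying that $g$ is \emph{strictly} increasing on the closed interval $[\Scrit,\infty)$ even though $g'(\Scrit) = 0$, confirming that equation~\ref{eqn:Imax-t} is genuinely valid (and continuous) up to and including $S = \Scrit$, and checking that the two equality outcomes produced above ($S_x(t_x) = S_y(t_y)$ versus $S_y(t_y) \le \Scrit$) together exactly match the disjunction in the statement. Everything else is a direct application of equation~\ref{eqn:Imax-t} and the sign of $\dv{I}{t}$ in the SIR model.
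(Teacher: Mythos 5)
Your proof is correct and takes essentially the same route as the paper's: both rest on equation \ref{eqn:Imax-t} together with the monotonicity of $S - \frac{1}{\Ro}\log(S)$ above $\Scrit$, and on the observation that $\Imax$ equals the current prevalence once $S \le \Scrit$. Your case split on $S_x(t_x)$ versus $\Scrit$ (rather than the paper's split on $S_y(t_y)$) is a cosmetic difference, and if anything handles the boundary configuration $S_x(t_x) = \Scrit < S_y(t_y)$ slightly more cleanly than the paper's three-case enumeration.
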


\begin{proof}
There are three cases.

\textbf{Case 1:} $S_y(t_y) \le \Scrit$.
If $S_y(t_y) \le \Scrit$, then $S_x(t_x) \le \Scrit$, and neither epidemic will grow after $t_x$ or $t_y$, respectively. It follows that $\Imax_{t_x} = I_x(t_y) = \Imax_{t_y} = I_y(t_y)$

\textbf{Case 2:} $S_y(t_y) > \Scrit$, $S_x(t_x) < \Scrit$.
In this case, $\Imax_{t_x} = I_x(t_y)$ but $\Imax_{t_y} > I_y(t_y)$, since $\dv{I_y}{t} > 0$ if $I_y(t) > 0$ and $S_y(t) > \Scrit$. So we have $\Imax_{t_x} < \Imax_{t_y}$

\textbf{Case 3:} $S_x(t_x), S_y(t_y) > \Scrit$.
The result in this case follows immediately from the fact that, fixing $I(t)$, $\Imax(t)$ is an increasing function of $S(t)$ when $S(t) > \Scrit$. We can see that it is by taking the partial derivative with respect to $S$ of the expression in equation \ref{eqn:Imax-t}:
\begin{equation}
    \pdv{}{S}\Imax(t) = 1 - \frac{1}{\Ro S}
\end{equation}

If $S > \Scrit$, $\frac{1}{\Ro S} < 1$, so  $\pdv{}{S}\Imax(t) > 0$, and $\Imax(t)$ is an increasing function of $S(t)$.
\end{proof}

\subsection{Intervention function}\label{sec:intervention-formal}
We say that a right-continuous function with finite discontinuity points $b(t): \mathbb{R} \mapsto \left [ 0,1 \right ]$ is an intervention beginning at $t_{i}$ with duration $\tau$ if $b(t)\equiv 1$ for all $t \in (-\infty,t_{i}  )\bigcup \left(t_{i}+\tau,\infty\right)$. The SIR model under such an intervention will then take the form

\begin{equation}
\begin{aligned}
\frac{dS}{dt} &= -b(t)*\beta SI\\
\frac{dI}{dt} &= b(t)*\beta SI-\gamma I\\
\frac{dR}{dt} &= \gamma I\\
\end{aligned}
\end{equation}

We wish to show that for every $\tau$ there exists an intervention that minimizes \Imax{} and that such an optimal intervention must be identical except at a finite set of times $\{t_j \mid t_j \in (t_i, t_i + \tau)\}$ to one of the form:
\begin{equation}
\begin{aligned}
\bopt(t)=  \begin{cases}
    \frac{\gamma}{\beta S}, & t\in[t_{i},t_{i}+f\tau) \\
	0, & t\in[t_{i}+f\tau,t_{i}+\tau] \\
  \end{cases}
\end{aligned}
\end{equation}
For some values of $t_{i}\in  \mathbb{R}$ and $f\in (0,1]$. We divide the proof into a series of lemmas.

\begin{lemma}[More fire, less fuel]
\label{lemma:more_infected_less_susceptible}
Let $b_{x}(t)$ and $b_{y}(t)$ be two interventions beginning at the same $t_{i}$ and lasting until $t_{f}=t_{i}+\tau$. Denote the infectious and susceptible fractions for $b_x(t)$ and $b_y(t)$, respectively, by $I_x(t), S_x(t)$ and $I_y(t), S_y(t)$. If $I_{x}(t) \geq I_{y}(t)$ for all $t\in [t_{i},t_{f}]$, and $I_{x}(t) > I_{y}(t)$ for some $t\in (t_{i},t_{f})$, then $S_{x}(t_{f}) < S_{y}(t_{f})$.
\end{lemma}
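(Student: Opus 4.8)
The plan is to exploit the fact that, whatever the intervention function $b(t)$, adding the first two equations of the intervention SIR system gives the intervention-independent identity $\dv{}{t}\big(S+I\big) = -\gamma I$. Integrating this over $[t_i, t_f]$ for each of the two systems and subtracting will express $S_x(t_f) - S_y(t_f)$ entirely in terms of the gap $I_x - I_y$, whose sign is controlled by hypothesis.

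Concretely, I would proceed as follows. First, note that both interventions satisfy $b \equiv 1$ on $(-\infty, t_i)$, so the two trajectories solve the same (unintervened) SIR system with the same initial data up to $t_i$ and therefore coincide there; in particular $S_x(t_i) = S_y(t_i)$ and $I_x(t_i) = I_y(t_i)$, so the quantities $S_x + I_x$ and $S_y + I_y$ agree at $t_i$. Next, integrating $\dv{}{t}(S+I) = -\gamma I$ on $[t_i, t_f]$ for each system and subtracting yields
\begin{equation}
\big(S_x(t_f) - S_y(t_f)\big) + \big(I_x(t_f) - I_y(t_f)\big) = -\gamma \int_{t_i}^{t_f} \big(I_x(s) - I_y(s)\big)\, ds .
\end{equation}
Then I rearrange to isolate $S_x(t_f) - S_y(t_f)$ and bound the right-hand side: the term $I_x(t_f) - I_y(t_f)$ is $\ge 0$, since $I_x \ge I_y$ on all of $[t_i, t_f]$ (continuity of $I$ carries the hypothesis to the endpoint), and the integral is strictly positive because $I_x - I_y$ is continuous, nonnegative on $[t_i,t_f]$, and strictly positive at some interior point, hence strictly positive on an open subinterval. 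Therefore $S_x(t_f) - S_y(t_f)$ is a nonpositive quantity minus a strictly positive quantity, so $S_x(t_f) < S_y(t_f)$.

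I do not anticipate a real obstacle here: the core of the argument is the one-line conservation identity. The only points needing a little care are (i) justifying that the pre-intervention trajectories coincide so that the boundary terms at $t_i$ cancel, and (ii) upgrading ``$I_x - I_y > 0$ at a single interior point'' to ``$\int_{t_i}^{t_f}(I_x - I_y) > 0$'', which uses only that the solutions $I_x, I_y$ remain continuous even though $b$ may have finitely many jump discontinuities (each $I$ is the integral of a bounded function). Everything else is routine.
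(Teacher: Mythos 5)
Your proof is correct and is essentially the paper's argument in integral form: the identity $\dv{}{t}(S+I) = -\gamma I$ is precisely $\dv{R}{t} = \gamma I$ (since $S+I = 1-R$), and your integral $\gamma\int_{t_i}^{t_f}(I_x - I_y)\,ds$ is exactly the paper's quantity $R_x(t_f) - R_y(t_f)$, shown strictly positive by the same continuity argument at the interior point. The final rearrangement using $I_x(t_f) \ge I_y(t_f)$ matches the paper's closing step as well.
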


\begin{proof}
The difference $\Delta R_{xy}(t)=R_{x}(t)-R_{y}(t)$ obeys $\frac{d\Delta R_{xy}}{dt}=\gamma (I_{x}-I_{y})$, which is non-negative for all $t\in [t_{i},t_{f}]$. Therefore $\Delta R_{xy}(t)$ is non-decreasing during any time interval contained in $[t_{i},t_{f}]$. Take $\bar{t}\in (t_{i},t_{f})$ such that $I_{x}(\bar{t})-I_{y}(\bar{t}) > 0$. Because the intervention $b(t)$ allows only a finite number of discontinuities, the resulting $I(t)$ is continuous. And so there is an $\epsilon$ such that $I_{x}(t)- I_{y}(t) >0$ for all $t\in [\bar{t}-\epsilon,\bar{t}+\epsilon]$. Therefore $\Delta R_{xy}(t)$ must be strictly increasing during $[\bar{t}-\epsilon,\bar{t}+\epsilon]$.

Because $\Delta R_{xy}(t)$ never decreases and sometimes increases during $[t_{i},t_{f}]$, it follows that $\Delta R_{xy}(t_{i})<\Delta R_{xy}(t_{f})$, but $\Delta R_{xy}(t_{i})=0$ so $\Delta R_{xy}(t_{f})>0$ and $R_{x}(t_{f})>R_{y}(t_{f})$. But $I_{x}(t_{f})\geq I_{y}(t_{f})$ and $S=1-(I+R)$, so $S_{x}(t_{f}) < S_{y}(t_{f})$.
\end{proof}

We now wish to prove that an epidemic left alone---without any intervention---will reach any given level of prevalence faster than it would have if any transmission-reducing intervention had taken place. In other words, an intervention always slows down the epidemic.

\begin{lemma}[Do nothing to burn faster and brighter]
\label{lemma:do_nothing}
Let $b_{1}(t) \equiv 1$ denote the null intervention and take $t_{max}$ such that $I_{1}(t_{max})=I_{1}^{max}$. Let $b_{x}(t)$ be an intervention that starts at time $t_i$, has duration $\tau$, and satisfies $b_{x}(t)<1$ for almost all $t \in [\bar{t}-\epsilon,\bar{t}+\epsilon]$ for some $\epsilon > 0$ and some $\bar{t}<t_{max}$. Then $I_{x}(t)<I_{1}(t)$ for all $t \in [\bar{t},t_{max}]$. 
\end{lemma}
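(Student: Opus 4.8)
The plan is to locate a time $t_0\le\bar t$ at which the intervention has already pulled the prevalence strictly below the no-intervention curve, $I_x(t_0)<I_1(t_0)$, and then to show this strict gap cannot be closed up to $t_{max}$. First some preliminaries. Since $b_x\equiv 1$ for $t<t_i$ and both systems start from identical data, the two trajectories coincide on $(-\infty,t_i]$; since $b_x<1$ almost everywhere on $[\bar t-\epsilon,\bar t+\epsilon]$ while $b_x\equiv 1$ off $[t_i,t_i+\tau]$, necessarily $[\bar t-\epsilon,\bar t+\epsilon]\subseteq[t_i,t_i+\tau]$, so $t_i\le\bar t-\epsilon<\bar t<t_{max}$. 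Integrating $\dot R=\gamma I$ from $t_i$ gives the bookkeeping identity $R_1(t)-R_x(t)=\gamma\int_{t_i}^{t}(I_1-I_x)$, so wherever $I_x\le I_1$ back to $t_i$ we automatically get $R_x\le R_1$ and hence $S_x\ge S_1$ (an intervention only saves susceptibles). Finally, writing $g(S):=S-\Scrit\log S$ and $Q:=I+g(S)$, a one-line computation gives $\dot Q=-\gamma I(1-b)\le 0$ under any intervention, so $Q_x(t)\le Q_x(t_i)=C_1$ for all $t\ge t_i$, where $C_1$ is the constant value of $Q$ along the null trajectory; equivalently $I_x(t)\le C_1-g(S_x(t))$, whereas $I_1(t)=C_1-g(S_1(t))$ exactly. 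I will use that $g$ is strictly increasing on $[\Scrit,\infty)$ and that $t\le t_{max}$ forces $S_1(t)\ge\Scrit$.

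To produce $t_0$, let $\hat t:=\inf\{\,t:\ b_x<1\text{ on a subset of }[t_i,t]\text{ of positive measure}\,\}$. Then $b_x\equiv 1$ a.e.\ on $[t_i,\hat t]$, so the two systems coincide there; $\hat t\le\bar t-\epsilon$ because $b_x<1$ a.e.\ on $[\bar t-\epsilon,\bar t]$; and the set $\{b_x<1\}$ has positive measure in every right-neighbourhood of $\hat t$. Along $[\hat t,\cdot]$ the gap $\Delta:=I_1-I_x$ obeys $\Delta(\hat t)=0$ and $\dot\Delta=\beta S_1 I_1(1-b_x)+(\text{terms }O(\Delta)\text{ and }O(\int\Delta))$, with the leading forcing term nonnegative and of positive integral over every $(\hat t,t]$; a Gronwall-type estimate then yields $\Delta(t)>0$, i.e.\ $I_x(t)<I_1(t)$, for all $t$ in some right-neighbourhood $(\hat t,\hat t+\eta)$. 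Since $\hat t<\bar t$, I may choose such a $t_0$ with $t_0\le\bar t$; and since the systems coincide on $[t_i,\hat t]$ and $\Delta\ge 0$ on $[\hat t,t_0]$, we have $I_x\le I_1$ on all of $[t_i,t_0]$.

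It remains to show $I_x<I_1$ throughout $[t_0,t_{max}]$. Suppose not; since $I_x\le I_1$ on $[t_i,t_0]$ and $I_x<I_1$ just past $t_0$, there is a first time $t_c\in(t_0,t_{max}]$ with $I_x(t_c)=I_1(t_c)$, so that $I_x<I_1$ on $[t_0,t_c)$ and $I_x\le I_1$ on $[t_i,t_c]$. Then the bookkeeping identity gives $R_1(t_c)-R_x(t_c)=(R_1(t_0)-R_x(t_0))+\gamma\int_{t_0}^{t_c}(I_1-I_x)>0$, the first bracket being $\ge 0$ and the integral strictly positive, whence $S_x(t_c)>S_1(t_c)$. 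On the other hand, $I_x(t_c)=I_1(t_c)$ together with $I_x(t_c)\le C_1-g(S_x(t_c))$ and $I_1(t_c)=C_1-g(S_1(t_c))$ forces $g(S_x(t_c))\le g(S_1(t_c))$ --- which contradicts $S_x(t_c)>S_1(t_c)\ge\Scrit$ and the strict monotonicity of $g$ on $[\Scrit,\infty)$. Hence no such $t_c$ exists, and $I_x(t)<I_1(t)$ for every $t\in[t_0,t_{max}]\supseteq[\bar t,t_{max}]$.

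The step I expect to be the genuine obstacle is exactly this last, no-rebound step. Once the intervention is over, system $x$ carries strictly more susceptibles than the null epidemic and hence has the strictly larger instantaneous growth rate $\beta S-\gamma$, so a naive local ODE comparison at a candidate crossing has the wrong sign and does nothing to forbid $I_x$ from catching $I_1$. What actually rules out a re-crossing --- and only for $t\le t_{max}$, where the claim is sharp --- is the global constraint from the Kermack--McKendrick invariant, which pins the intervention trajectory pointwise on or below the null $(S,I)$-curve $I=C_1-g(S)$; since on $[t_i,t_{max}]$ the null point sits on the increasing branch of that curve, equal prevalence at $t_c$ would force $S_x(t_c)\le S_1(t_c)$, contradicting the susceptible surplus the intervention has built up. The only other care needed is the local differential-inequality estimate of the second paragraph, where the hypothesis that $b_x<1$ on a set of positive measure enters, together with routine bookkeeping around the finitely many discontinuities of $b_x$ (cf.\ Remark~\ref{remark:continuity-of-imax-t}).
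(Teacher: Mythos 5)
Your proof is correct, and the key step is genuinely different from the paper's. Both arguments begin the same way, by locating the first time the intervention bites and showing $I_x$ drops strictly below $I_1$ immediately afterwards (your measure-theoretic $\hat t$ plays the role of the paper's $t_d$, and your Gr\"onwall estimate for local positivity of $\Delta=I_1-I_x$ is left as a sketch, but it is a standard estimate that does go through: one bounds $|\Delta|$ by a constant times the nonnegative forcing $h(t)=\int\beta(1-b_x)SI$ and then gets $\Delta\ge h\,(1-O(t-\hat t))>0$). Where you diverge is the no-rebound step. The paper rules out a re-crossing time $t_e<t_{max}$ by reparametrizing time via $\hat t=I_1^{-1}(I_x(t))$ and comparing $S$ and $R$ between the two systems at matched prevalence levels, an argument that is delicate to follow. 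You instead combine two global facts: the bookkeeping identity $R_1-R_x=\gamma\int(I_1-I_x)$, which at a putative first crossing forces a strict susceptible surplus $S_x(t_c)>S_1(t_c)$, and the Kermack--McKendrick quantity $Q=I+S-\Scrit\log S$, which is conserved along the null trajectory but satisfies $\dot Q=-\gamma I(1-b)\le 0$ under any intervention, so the intervention trajectory is pinned on or below the null $(S,I)$-curve; since $S_1(t_c)\ge\Scrit$ for $t_c\le t_{max}$, equal prevalence would force $S_x(t_c)\le S_1(t_c)$, a contradiction. This Lyapunov-style argument is cleaner and more transparent than the paper's inverse-function construction, and it makes explicit why the conclusion is sharp at $t_{max}$ (it is exactly where the null trajectory leaves the increasing branch of $g(S)=S-\Scrit\log S$); your closing remark correctly identifies that a naive local comparison at the crossing has the wrong sign, which is precisely the trap the paper's proof also has to work around. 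The only thing I would ask you to tighten is the Gr\"onwall paragraph, since the $-\beta\gamma I_x\int\Delta$ term has the unhelpful sign and the estimate should be written out rather than asserted.
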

\begin{proof}
Let the time of divergence of the interventions $b_1$ and $b_x$ be given by $t_{d}=\inf \{t \in (t_{i},t_\mathrm{max})\mid b_{x}(t)<1\}$. Note that this infimum must exist by completeness, because it is taken over a bounded and non-empty set. It is easy to see that $b_{x}(t)=1$ and that $I_{x}(t)=I_{1}(t)$ for all $t<t_{d}$. Also, by the right continuity of $b_{x}$, there exists an interval $(t_{d},t_{d}+\epsilon)$ such that $I_{1}>I_{x}$ and $b_1\beta S_{1}I_{1}>b_{x}\beta S_{x}I_{x}$. Suppose there exists a minimal $t_e\in (t_d,t_{\max})$ such that $I_{x}(t_e)=I_{1}(t_e)$. We will find a contradiction. Note that $I_x<I_1$ in $(t_d,t_e)$. Because $I_1$ is monotonic and continuous in $(-\infty,t_{\max})$, we can invert $I_1$ and define $\hat{t}=I_{1}^{-1}(I_{x}(t))$ for all $t \in (-\infty,t_{\max})$.

We can see that $\frac{dI_1}{dt}(\hat{t})$ is the growth rate of the infectious class under the null intervention when $I_1=I_x(t)$. It follows that $\frac{dI_1}{dt}(\hat{t})<\frac{dI_x}{dt}(t)$ for some value of $t$, otherwise $I_1$ would always dominate $I_x$, and $t_e$ would not exist. Let $t_{\inf}=\inf\{t\in (t_d,t_e)\mid \frac{dI_1}{dt}(\hat{t})<\frac{dI_x}{dt}(t)\}$. This implies that $S_{1}(\hat{t})\leq S_{x}(t)$ for some interval including $t_{\inf}$ that can be maximally extended to the left as $(t_{\min},t_{\inf}]$. By continuity of $S$, $S_1(\hat{t}_{\min})=S_x(t_{\min})$.

If $t_{\min}\neq t_d$, then $t_{\min}> \hat{t}_{\min}$, and $R_1(\hat{t}_{\min})=R_x(t_{\min})$, but this is impossible because by construction $\frac{dI_1}{dt}(\hat{t})>\frac{dI_x}{dt}(t)$ for all $t\in (t_d,t_{\min})$, and therefore $R_1(\hat{t}_{\min})<R_x(t_{\min})$.

If $t_{\min}= t_d$, then $t_{\min}= \hat{t}_{\min}$. Also, $R_1(\hat{t})>R_x(t)$ for all $t \in (t_d,t_\mathrm{inf})$. But that is impossible because $\frac{dR_{1}(\hat{t}_{d})}{dt}=\frac{b(t_d)\beta SI-\gamma I}{\beta SI-\gamma I}\gamma I<\gamma I= \frac{dR_{x}(t_{d})}{dt}$, and trivially $R_1(\hat{t}_d)=R_x(t_d)$.
\end{proof}
\begin{lemma}[Wait, maintain, suppress]
\label{lemma:wait-maintain-suppress}
Let $b_x$ be any intervention, and $I^{max}_x=\max\{ I_x(t)\}$. There exists an intervention $b_y$ with same beginning and duration as $b_x$ of the form

\begin{equation}
\begin{aligned}
\label{eqn:up-hold-down}
b_y(t)=  \begin{cases}
	1 & t\in[t_{i},t_{i}+g\tau) \\
    \frac{\gamma}{\beta S}, & t\in[t_{i}+g\tau,t_{i}+f\tau) \\
	0, & t\in[t_{i}+f\tau,t_{i}+\tau] \\
  \end{cases}
\end{aligned}
\end{equation}

For some $g<f$ and $g,f\in(0,1]$, such that $I^{max}_y \leq I^{max}_x$.
\end{lemma}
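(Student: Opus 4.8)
The plan is to take an arbitrary intervention $b_x$, read off a single level from it, and build a wait--maintain--suppress intervention $b_y$ of the form~\eqref{eqn:up-hold-down} whose only genuinely free parameter is the time at which suppression begins; that one parameter is then tuned so that the post-intervention rebound of $b_y$ does not exceed $I^{max}_x$. Concretely, fix the hold level at $P := \max_{t\in[t_i,t_i+\tau]} I_x(t)$, the largest infectious prevalence reached under $b_x$ while it is active, and note $I_i = I_x(t_i)\le P\le I^{max}_x$. Let $b_y$ wait ($b_y\equiv 1$) until the null-intervention curve $I_1$ first attains the level $P$ (write $S_g$ for the susceptible fraction at that moment), then maintain ($\Reff=1$, so $I_y\equiv P$) until time $t_i+f\tau$, then suppress ($\Reff=0$) until $t_i+\tau$. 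The wait duration $g\tau$ is well defined and at most $\tau$: by Lemma~\ref{lemma:do_nothing}, $I_1\ge I_x$ throughout the window (with equality while $b_x\equiv 1$), so $I_1$ reaches $P$ no later than $I_x$ does, hence within $[t_i,t_i+\tau]$, and monotonicity of $I_1$ before its own peak makes the hitting time unique. Since $I_y$ rises from $I_i$ to $P$, holds at $P$, then falls, its maximum over the window is exactly $P\le I^{max}_x$; so it remains only to choose $f$ so that the post-intervention peak of $b_y$ is also at most $I^{max}_x$.

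For the choice of $f\in[g,1]$: during the maintain phase susceptibles fall at the constant rate $\gamma P$, while the wait phase consumes a fixed amount of susceptibles (determined by $P$) and the suppress phase consumes none, so the end state is $S_y(t_i+\tau;f)=S_g-\gamma P(f-g)\tau$ (continuous, strictly decreasing in $f$) and $I_y(t_i+\tau;f)=P\,e^{-\gamma(1-f)\tau}$ (continuous, strictly increasing in $f$). The post-intervention peak of $b_y$ equals $\Imax(S_y(t_i+\tau;f),I_y(t_i+\tau;f))$ --- read off equation~\eqref{eqn:Imax-t} when $S_y(t_i+\tau;f)\ge\Scrit$, and equal to $I_y(t_i+\tau;f)\le P$ otherwise --- which by Remark~\ref{remark:continuity-of-imax-t} is continuous in $f$; meanwhile $I^{max}_x\ge\Imax(S_x(t_i+\tau),I_x(t_i+\tau))$, the post-intervention peak of $b_x$. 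I would trace the monotone end-state arc $f\mapsto(S_y(t_i+\tau;f),I_y(t_i+\tau;f))$ in the $(S,I)$-plane and use an intermediate-value argument to pick $f^{*}$ at which the arc meets either the vertical or the horizontal line through $(S_x(t_i+\tau),I_x(t_i+\tau))$ with the remaining coordinate on the favourable side; Lemma~\ref{lemma:fuel-fire} or Lemma~\ref{lemma:fire-fire} then shows the post-intervention peak under $b_y$ is at most that under $b_x$, hence at most $I^{max}_x$, and together with the within-window bound $P\le I^{max}_x$ this yields $I^{max}_y\le I^{max}_x$.

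The hard part will be this last step: showing the end-state arc of $b_y$ genuinely straddles $(S_x(t_i+\tau),I_x(t_i+\tau))$ as the monotonicity lemmas require, rather than lying entirely ``below'' it in an unfavourable direction. The right tool is the susceptible-budget identity $S_i-S_{\text{end}}=(I_{\text{end}}-I_i)+\gamma\int_{t_i}^{t_i+\tau}I(t)\,dt$, which converts the comparison of end states into a comparison of $\int_{\text{window}}I_y$ with $\int_{\text{window}}I_x$; establishing the needed inequalities uses $I_x\le P$ on the window, $I_x\le I_1$ on the wait sub-interval (Lemma~\ref{lemma:do_nothing}), and Lemma~\ref{lemma:more_infected_less_susceptible} to control susceptibles at intermediate times. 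A handful of boundary cases must also be dispatched separately: if $S_y(t_i+\tau;f)$ falls below $\Scrit$ the post-peak is merely $I_y(t_i+\tau;f)\le P$ and we are done; if $P=I_i$ then $g=0$ and $b_y$ has no wait phase (pure maintain--then--suppress); and if $b_x\equiv 1$ then $I^{max}_x$ is the uncontrolled peak, which every intervention strictly beats. A cleaner alternative worth trying is to choose the hold level and the suppression onset so that $b_y$ ends in exactly the same state as $b_x$, whence $I^{max}_y=\max(\text{hold level},\ \Imax(S_x(t_i+\tau),I_x(t_i+\tau)))$, so it suffices to arrange such a match with hold level at most $\max_{\text{window}}I_x\le I^{max}_x$ --- which once more reduces, via the budget identity, to matching $\int_{\text{window}}I$.
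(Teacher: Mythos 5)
Your construction of the wait phase is exactly the paper's: fix the hold level $P = \max_{[t_i,t_f]} I_x$, wait until the null curve $I_1$ first reaches $P$ (Lemma \ref{lemma:do_nothing} guarantees this happens no later than $I_x$ reaches it, so $g\le f\le 1$ is available), then maintain. The within-window bound $\max_{[t_i,t_f]} I_y = P \le I^{max}_x$ and the boundary case where $S_y$ would hit $\Scrit$ during maintenance (truncate the maintain phase there; the epidemic only declines afterward) are also handled as in the paper. The gap is precisely where you say it is: you never actually choose $f$, and the intermediate-value ``straddling'' argument you sketch for the end-state arc is left unexecuted. As posed it is also harder than necessary, because you would need to show the arc meets one of the two target lines through $(S_x(t_f), I_x(t_f))$ on the favourable side, and you have no a priori control over where the arc sits relative to that point.

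The paper's resolution is to fix $f$ explicitly by matching the \emph{final infectious fraction only}: $(1-f)\tau = \frac{1}{\gamma}\log\frac{P}{I_x(t_f)}$, so that $I_y(t_f) = I_x(t_f)$ by construction. (This $f$ lies in $[g,1]$ because $I_x$ decays at rate at most $\gamma$ after it attains $P$, which it does no earlier than $t_i+g\tau$.) With that choice one shows $I_y \ge I_x$ on the whole window: on the wait phase by Lemma \ref{lemma:do_nothing}, on the maintain phase because $I_y \equiv P \ge I_x$, and on the suppress phase because $I_y$ decays at the maximal possible rate $\gamma$ toward the same endpoint, so if it ever dropped below $I_x$ it could not return to equality at $t_f$. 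Lemma \ref{lemma:more_infected_less_susceptible} then gives $S_y(t_f) < S_x(t_f)$, and Lemma \ref{lemma:fuel-fire} (equal $I(t_f)$, fewer susceptibles) gives $\Imax_y(t_f) \le \Imax_x(t_f) \le I^{max}_x$. This is exactly the ``budget identity'' route you gesture at---Lemma \ref{lemma:more_infected_less_susceptible} \emph{is} that identity---but the bookkeeping only becomes trivial once the endpoint $I_y(t_f)=I_x(t_f)$ is pinned down. Your ``cleaner alternative'' at the end is nearly the right move, but matching the full end state $(S,I)$ is over-determined and unnecessary: matching $I(t_f)$ alone suffices and automatically forces the favourable inequality on $S(t_f)$.
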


\begin{proof}
Define $I^{\tau}_x=\max\{ I_x(t) \mid t\in [t_i,t_f]\}$. Following Lemma \ref{lemma:do_nothing}, take $g$ such that $t_{i}+g\tau=I_{1}^{-1}(I_{x}^\tau)$. Now take $f$ such that $(1-f)\tau=\frac{1}{\gamma}\log{\frac{I_{x}^{\tau}}{I_x(t_f)}}$. From Lemma \ref{lemma:do_nothing} it is clear that $I_y(t)\geq I_x(t)$ for $t\in[t_{i}\tau,t_{i}+g\tau)$. By construction $I_y(t)=I_{x}^{\tau}$ for $t\in[t_{i}+g\tau,t_{i}+f\tau)$, and therefore the inequality remains true. Finally, for  $t\in[t_{i}+f\tau,t_{i}+\tau]$, because $I_y(t)$ decays faster than $I_x(t)$, then if $I_y(t)< I_x(t)$ at any time, then $I_y(t_f)< I_x(t_f)$, but by construction $I_y(t_f)= I_x(t_f)$. By Lemma \ref{lemma:more_infected_less_susceptible}, $S_y(t_f)< S_x(t_f)$, and by Lemma \ref{lemma:fuel-fire}, $\Imax_y<\Imax_x$.

It is possible however that this proposed $b_y$ is not a viable intervention if $S_y(t^\mathrm{crit}_y)=\Scrit$ for some $t^\mathrm{crit}_y \in [t_{i}+g\tau,t_{i}+f\tau)$, as this would require $b_y$ to assume values larger than $1$. If that is the case we can define $b_{\bar{y}}$, by taking $f$ such that $t_i+f\tau=t^\mathrm{crit}_y$. Note that because by the end of the intervention $b_{\bar{y}}$ the infectious class will monotonically decrease because $S_{\bar{y}}\leq \Scrit$, this implies that $\Imax_{\bar{y}}= I_{\bar{y}}^{\tau}= I_{x}^{\tau} \leq \Imax_x$.
\end{proof}

\begin{theorem}[Maintain, then suppress]
\label{theorem:maintain-then-suppress}
An intervention $\bopt(t)$ such that $I_{opt}^{max}\leq I_{x}^{max}$ for any intervention $b_{x}(t)$ must take the form

\begin{equation}
\begin{aligned}
\label{eqn:optimal-supp}
\bopt(t)=  \begin{cases}
    \frac{\gamma}{\beta S}, & t\in[t_{i},t_{i}+f\tau) \\
	0, & t\in[t_{i}+f\tau,t_{i}+\tau] \\
  \end{cases}
\end{aligned}
\end{equation}
for some value of $t_{i}\in  \mathbb{R}$ and $f \in (0,1]$.
\end{theorem}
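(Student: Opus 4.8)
The plan is a two-step reduction, resting on the lemmas just established. Lemma~\ref{lemma:wait-maintain-suppress} already shows that every intervention $b_x$ is weakly dominated in $\Imax$ by a \emph{wait--maintain--suppress} intervention $b_y$ of the form~\eqref{eqn:up-hold-down}, with some wait fraction $g$ and maintenance-end fraction $f$, $0\le g<f\le 1$. So it remains to (i) remove the initial wait, by showing that any such $b_y$ with $g>0$ is in turn dominated by a \emph{pure} maintain--suppress intervention of the form~\eqref{eqn:optimal-supp}, and then (ii) follow the equalities through both reductions to conclude that an optimal intervention must agree with~\eqref{eqn:optimal-supp}. Existence of an optimum is a separate (and easier) matter: restricting to the form~\eqref{eqn:optimal-supp}, one may take $t_i$ in a bounded interval (intervening before $I$ is appreciable, or after the unperturbed peak, achieves no more than doing nothing, which is not optimal) and $f\in[0,1]$, so that $\Imax$ is a continuous function of $(t_i,f)$---by Remark~\ref{remark:continuity-of-imax-t} and continuous dependence of the SIR solution on its two switching times---on a compact set, hence attains a minimum, which by the domination chain is optimal among all interventions.

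For step (i), fix $b_y$ with $g>0$. On $[t_i,t_i+g\tau)$ the system runs freely and reaches some state $(\tilde S,\tilde I)$ at time $t_i+g\tau$. Let $b_z$ be the maintain--suppress intervention of duration $\tau$ that \emph{begins} at $t_i+g\tau$: it maintains $\Reff=1$ on $[t_i+g\tau,\ t_i+f\tau)$ and suppresses on $[t_i+f\tau,\ t_i+(1+g)\tau]$, so it has the form~\eqref{eqn:optimal-supp} with start time $t_i+g\tau$ and maintenance fraction $f-g\in(0,1]$. Since $b_z\equiv 1$ before $t_i+g\tau$ as well, $b_y$ and $b_z$ have identical trajectories on $(-\infty,\ t_i+f\tau]$, exiting the maintenance phase at a common state $(S^*,\tilde I)$. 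During suppression $S$ is frozen, so both end their interventions at susceptible fraction $S^*$, but $b_z$ suppresses for the extra stretch $g\tau$ and hence exits with infectious fraction smaller by the factor $e^{-\gamma g\tau}$. Now Lemma~\ref{lemma:fire-fire}, applied to the two epidemics at their respective (distinct) end-of-intervention times---equal $S$, strictly smaller $I$---gives that the post-intervention rebound peak of $b_z$ is no larger than that of $b_y$, and strictly smaller when $g>0$; since the only other candidate for the global peak, the maintenance plateau $\tilde I$, is identical for the two, $\Imax_z\le\Imax_y$. (If the shared maintenance phase would drive $S$ down to $\Scrit$, forcing $b>1$, this happens for $b_z$ exactly when it does for $b_y$, and Lemma~\ref{lemma:wait-maintain-suppress} already shows the truncated strategy has $\Imax=\tilde I$, unaffected by any preceding wait.)

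Step (ii) chains the reductions and extracts uniqueness. Let $b_{opt}$ be globally optimal; Lemma~\ref{lemma:wait-maintain-suppress} gives $b_y$ with $\Imax_y\le\Imax_{opt}$, and step (i) gives $b_z$ with $\Imax_z\le\Imax_y$, so $\Imax_{opt}=\Imax_y=\Imax_z$. The crux is that these equalities force the non-degenerate situation in which the global peak is the post-intervention rebound: if instead $\Imax_{opt}$ equalled the value attained during the intervention, one could lower it by shifting the start time slightly earlier---since the rebound then has slack---contradicting optimality. In that non-degenerate situation $\Imax_z=\Imax_y$ forces the two rebound peaks to coincide, hence $g=0$, so $b_y$ is already of the form~\eqref{eqn:optimal-supp}. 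Finally, the construction in Lemma~\ref{lemma:wait-maintain-suppress} gives $I_y\ge I_{opt}$ throughout the intervention window, and if this were ever strict then (Lemma~\ref{lemma:more_infected_less_susceptible}) $S_y<S_{opt}$ at $t_i+\tau$, hence (Lemma~\ref{lemma:fuel-fire}) $\Imax_y<\Imax_{opt}$, a contradiction; so $I_y\equiv I_{opt}$ on the window, whence $S_y\equiv S_{opt}$ there, and comparing the two equations for $\dv{S}{t}$ gives $b_y(t)=b_{opt}(t)$ wherever $I(t)>0$, i.e.\ at every $t$ except the finitely many discontinuity points of the interventions. Thus $b_{opt}$ coincides with~\eqref{eqn:optimal-supp} off a finite set. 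I expect step (i)---engineering the comparison so that the trajectories literally coincide through the maintenance phase, reducing everything to a single invocation of Lemma~\ref{lemma:fire-fire}---to be the clean heart of the argument; the main obstacle is the bookkeeping in step (ii), and in particular ruling out the degenerate regime in which the epidemic peaks during the intervention and the minimiser is no longer pinned down by the suppression comparison alone.
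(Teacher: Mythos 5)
Your proposal is correct and follows essentially the same route as the paper: reduce via Lemma~\ref{lemma:wait-maintain-suppress} to the wait--maintain--suppress form, then eliminate the wait by re-declaring the intervention to start at $t_i+g\tau$ with maintenance fraction $f-g$, so that the trajectories coincide through the maintenance phase and the extra $g\tau$ of suppression yields equal $S$ and smaller $I$ at the end, whence Lemma~\ref{lemma:fire-fire} closes the argument. Your additional bookkeeping (the compactness argument for existence, and tracking strictness through Lemmas~\ref{lemma:more_infected_less_susceptible} and~\ref{lemma:fuel-fire} to show an optimum \emph{must} coincide with this form off a finite set) supplies details the paper's terser proof leaves implicit, but does not change the underlying argument.
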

\begin{proof}
If $b_x(t)$ is an optimal intervention, then Lemma \ref{lemma:wait-maintain-suppress} ensures that $b_x(t)$ is of the form given by equation \ref{eqn:up-hold-down}. Now consider the strategy $\bopt(t)$ of the form given by equation \ref{eqn:optimal-supp} (main text equation \ref{eqn:optimal}) with $t_i^\mathrm{opt}=t_i^{x}+g\tau$ and $f^\mathrm{opt}=f^{x}-g$. This new intervention functions exactly like $b_x$ for the entire duration of $b_x$, but then $\bopt$ is held at $0$ for a little longer further reducing $I_\mathrm{opt}$. This means that $S_\mathrm{opt}(t_{f}^\mathrm{opt})=S_{x}(t_{f}^x)$ and $I_\mathrm{opt}(t_{f}^\mathrm{opt})\leq I_{x}(t_{f}^x)$ and therefore by Lemma \ref{lemma:fire-fire}, $\Imax_\mathrm{opt}\leq \Imax_x$.
\end{proof}

\begin{remark}[The strategy is set at the start] Because during an intervention \bopt{} the susceptible class is depleted at a constant rate  for all $t \in [t_{i},t_{i}+f\tau)$ , \bopt{} can be written as

\begin{equation}
\begin{aligned}
\label{eqn:bopt_time}
\bopt(t)=  \begin{cases}
    \frac{\gamma}{\beta \big(S_{t_i} - I_{t_i} \gamma (t-t_i)\big)}, & t\in[t_{i},t_{i}+f\tau) \\
	0, & t\in[t_{i}+f\tau,t_{i}+\tau] \\
  \end{cases}
\end{aligned}
\end{equation}
For $S_{t_i}=S(t_i)$ and $I_{t_i}=I(t_i)$.
\end{remark}

\subsection{Results of an optimal intervention}

It follows that, given an optimal intervention $b$ of duration $\tau$ and depletion fraction $f$ begun at time $t_i$ and ending at $t_f = t_i + \tau$:

\begin{equation}
    S(t_f) = S(t_i) - \gamma \tau f I(t_i)
\label{eqn:opt-s-final}
\end{equation}

\begin{equation}
    I(t_f) = I(t_i) \exp\left[-\gamma \tau (1-f)\right]
\label{eqn:opt-i-final}
\end{equation}

In an SIR system without intervention that begins with a wholly susceptible population, we have the relation:

\begin{equation}
\label{eqn:S_I_relation}
    I(S) = 1 - S + \frac{1}{\Ro} \log(S)
\end{equation}

And so:
\begin{equation}
    I(t_i) = 1 - S(t_i) + \frac{1}{\Ro} \log(S(t_i))
\end{equation}

And we can likewise write $S(t_f)$ and $I(t_f)$ in terms of $S(t_i)$:

\begin{equation}
    S(t_f) = S(t_i) - \gamma \tau f \Big(1 - S(t_i) + \frac{1}{\Ro} \log(S(t_i))\Big)
\end{equation}

\begin{equation}
    I(t_f) = \Big(1 - S(t_i) + \frac{1}{\Ro} \log(S(t_i))\Big) \exp\left[-\gamma \tau (1-f)\right]
\end{equation}

This in turn allows us to express $\Imax(t_f)$ purely in terms of $S(t_i)$ and the parameters, though the expression is long:

\begin{equation}
\label{eqn:imax_in_Si}
    \begin{aligned}
        \Imax(t_f) &= \Big(1 - S(t_i) + \frac{1}{\Ro} \log(S(t_i))\Big) \exp\left[-\gamma \tau (1-f)\right] \\ 
        &+ S(t_i) - \gamma \tau f \Big(1 - S(t_i) + \frac{1}{\Ro} \log(S(t_i))\Big) \\
        &- \frac{1}{\Ro} \log\Big(S(t_i) - \gamma \tau f \Big(1 - S(t_i) + \frac{1}{\Ro} \log(S(t_i))\Big)\Big) \\
        &- \frac{1}{\Ro} + \frac{1}{\Ro} \log\Big(\frac{1}{\Ro}\Big)
    \end{aligned}
\end{equation}

\begin{remark}[Continuity of $I(t_i), \Imax(t_f)$]
\label{rem:continuity-of-I-max-candidates}
Notice that both $I(t_i)$ and $\Imax(t_f)$ are continuous functions of $S(t_i)$, since they are both linear combinations of continuous functions of $S(t_i)$ and since $S(t_i)$ is continuous in $t_i$, they are continuous functions of $t_i$.
\end{remark}

\begin{lemma}[Don't be late]
\label{lemma:dont-be-late}
Let $t_{p}$ be the infimum of times such that $I(t) = \Imax$, and let $t_i$ be the start of an optimal intervention. Then $t_{p} \ge t_i$. That is, $\Imax{}$ cannot occur before the intervention begins for an optimal intervention.
\end{lemma}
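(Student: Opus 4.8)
The plan is to argue by contradiction. Suppose an optimal intervention $b$ starting at $t_i$ had $t_p < t_i$. Since $b \equiv 1$ on $(-\infty, t_i)$, the trajectory of $(S,I)$ on that interval coincides with the null trajectory $(S_1, I_1)$; in particular $I(t) = I_1(t)$ for all $t < t_i$. Let $t_{max}$ be the time of the uncontrolled peak, so that $S_1(t_{max}) = \Scrit$, $I_1(t_{max}) = I_1^{max}$, and (from $\dot I_1 = (\beta S_1 - \gamma) I_1$ together with monotonicity of $S_1$) $I_1$ is strictly increasing on $[0, t_{max})$ and strictly decreasing on $(t_{max}, \infty)$.

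I would then split on whether $t_i \le t_{max}$. If $t_i \le t_{max}$, then $I = I_1$ is strictly increasing on $[0, t_i)$, so by continuity $I(t) < I(t_i) \le \Imax$ for every $t < t_i$; hence $\Imax$ is not attained before $t_i$ and $t_p \ge t_i$, contradicting the assumption. So we must have $t_i > t_{max}$. But then $S(t_i) = S_1(t_i) < \Scrit$, and since $S$ can only decrease further and $b(t) \le 1$, we get $\dot I = (b \beta S - \gamma) I \le (\beta S - \gamma) I < 0$ whenever $I > 0$ and $t \ge t_i$ (using $\beta \Scrit = \gamma$). Thus $I$ is nonincreasing on $[t_i, \infty)$, so $\sup_{t \ge t_i} I(t) = I(t_i) = I_1(t_i) < I_1^{max}$, while $\sup_{0 \le t < t_i} I(t) = \sup_{0 \le t < t_i} I_1(t) = I_1^{max}$, attained at $t_{max}$. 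Consequently $\Imax = I_1^{max}$: the purported optimal intervention performs no better than no intervention at all.

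To finish I would exhibit any intervention achieving $\Imax < I_1^{max}$, which shows $b$ cannot be optimal. A convenient witness is a maintenance-only intervention begun at $t = 0$: it holds $I \equiv I(0)$ while depleting susceptibles, ending at $S(t_f) = S(0) - \gamma \tau I(0) < S(0)$ with $I(t_f) = I(0)$; by equation \ref{eqn:Imax-t}, $\Imax$ is strictly increasing in $S$ on $(\Scrit, 1]$, so (as $S(0) > \Scrit$) the future peak $\Imax(t_f)$ is strictly below the uncontrolled peak, and the overall peak is $\max(I(0), \Imax(t_f)) = \Imax(t_f) < I_1^{max}$. One can equally invoke Lemma \ref{lemma:do_nothing} together with Lemmas \ref{lemma:fire-fire} and \ref{lemma:fuel-fire}, as in the proof of Theorem \ref{theorem:maintain-then-suppress}, to produce a strictly better intervention. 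This contradiction forces $t_p \ge t_i$.

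The main obstacle is the case $t_i > t_{max}$: it is perfectly consistent with $t_p < t_i$ at the level of the dynamics alone, so it cannot be ruled out internally and must instead be excluded by invoking optimality, i.e.\ by producing a concrete intervention that strictly lowers the peak. The remaining ingredients—strict monotonicity of $I_1$ on either side of its peak, and monotone decay of $I$ once $S$ drops below $\Scrit$—are routine.
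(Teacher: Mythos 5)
Your proof is correct and follows essentially the same route as the paper's: a peak before $t_i$ forces the epidemic to have already passed its uncontrolled maximum (so $S(t_i) < \Scrit$ and $I$ can never grow again), whence $\Imax$ equals the no-intervention peak, contradicting optimality. The only substantive difference is that you explicitly exhibit an improving intervention (the maintenance-only witness), a step the paper compresses to ``which we can with certainty improve upon.''
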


\begin{proof}
Since $b(t) = 1$ for $t<t_i$ $t_{p} < t_i$ necessarily implies that $S(t_i) < \Scrit$, that is, the epidemic is already declining when the intervention begins and will never grow again, regardless of the intervention approach (since $b(t) \le 1$, we cannot force a declining epidemic to grow). That in turn implies $\Imax = I^\text{peak}$, which we can with certainty improve upon. So $\Imax$ must occur during or after the intervention.
\end{proof}

\begin{corollary}[Start with fuel]
\label{corr:start-with-fuel}
It is immediate that $S(t_i) > \Scrit$.
\end{corollary}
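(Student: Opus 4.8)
The plan is to argue by contradiction directly from Lemma \ref{lemma:dont-be-late}, with one short extra step for the boundary case. First I would suppose an optimal intervention had $S(t_i) \le \Scrit$. Since the epidemic is ongoing, $I(t_i) > 0$, and since $b(t) \equiv 1$ for $t < t_i$, the susceptible fraction is strictly decreasing on $(-\infty, t_i]$. So under $S(t_i) < \Scrit$ the uncontrolled trajectory would already have passed its peak — which for the no-intervention SIR occurs exactly at $S = \Scrit$ — at some time strictly before $t_i$. That would put $t_p < t_i$, contradicting Lemma \ref{lemma:dont-be-late}. So the only thing left to rule out is the knife-edge $S(t_i) = \Scrit$.

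For that case I would observe that under any intervention $\dv{S}{t} = -b(t)\beta S I \le 0$, so $S(t) \le \Scrit$ for all $t \ge t_i$, whence $\Reff(t) \le 1$ and $\dv{I}{t} \le 0$ on $[t_i, \infty)$, giving $\Imax = I(t_i)$. But $S(t_i) = \Scrit$ means $t_i$ is exactly the uncontrolled peak time, so $I(t_i) = I_{1}^{max}$, the no-intervention maximum. I would then invoke Lemma \ref{lemma:do_nothing}: an intervention active on a small interval $[\bar{t}-\epsilon, \bar{t}+\epsilon]$ with $\bar{t} < t_{max}$ forces $I$ strictly below $I_1$ on $[\bar{t}, t_{max}]$, and such an intervention (begun slightly before the uncontrolled peak) fits inside the allotted duration $\tau$ and therefore attains $\Imax < I_{1}^{max}$. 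This contradicts optimality, so $S(t_i) > \Scrit$.

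I do not expect a genuine obstacle — the statement is flagged ``immediate'' precisely because the substantive work already sits in Lemma \ref{lemma:dont-be-late}, whose proof records that an $\Imax$ equal to the uncontrolled peak ``can with certainty be improved upon,'' and that same remark covers the $S(t_i) = \Scrit$ case. The mildest point to be careful about is phrasing the two cases cleanly: $S(t_i) \le \Scrit$ always yields $\Imax = I(t_i)$ with the global peak attained no later than $t_i$, and any such outcome is strictly beaten by intervening earlier during the exponential growth phase, so an optimal intervention must start with $S(t_i) > \Scrit$ — that is, ``with fuel.''
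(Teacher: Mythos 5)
Your overall route is the paper's own: Corollary \ref{corr:start-with-fuel} is ``immediate'' only because the proof of Lemma \ref{lemma:dont-be-late} already contains the substantive claim that an outcome with $\Imax$ equal to the uncontrolled peak ``we can with certainty improve upon,'' and you correctly see that the same remark must also absorb the knife-edge case $S(t_i)=\Scrit$, which the statement of the lemma alone ($t_p \ge t_i$) does not exclude. Isolating that boundary case explicitly is more careful than the paper's presentation. The one step that does not quite close is your appeal to Lemma \ref{lemma:do_nothing} to certify improvability: that lemma only gives $I_x(t) < I_1(t)$ on $[\bar{t}, t_{max}]$, i.e., it controls the intervened trajectory up to the time of the \emph{uncontrolled} peak. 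It says nothing about $I_x$ after $t_{max}$, so by itself it does not rule out a post-intervention rebound above $I_{1}^{max}$, and hence does not yield $\Imax_x < I_{1}^{max}$. The missing inequality is true and follows in one line from machinery already in the paper: take a full-suppression intervention begun at any $t_i' < t_{max}$, so that $S(t_i') > \Scrit$ and $I_1(t_i') < I_{1}^{max}$; during it $S$ is constant and $I$ strictly decays, so by Lemma \ref{lemma:fire-fire} (equivalently, equation \ref{eqn:Imax-t}) the post-intervention maximum is strictly below $\Imax_1(t_i') = I_{1}^{max}$, while during the intervention $I \le I_1(t_i') < I_{1}^{max}$. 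With that substitution in place of the Lemma \ref{lemma:do_nothing} citation, your argument is complete and coincides with the paper's intended one.
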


\begin{corollary}[Peak early]
\label{corr:peak-early}
Since $I(t) \le I(t_i)$ for $t \in [t_i, t_i + \tau]$ during an optimal intervention, if $\Imax$ occurs during the intervention, $\Imax = I(t_i)$
\end{corollary}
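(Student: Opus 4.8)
The plan is to prove the corollary in two moves: first, establish the inequality $I(t) \le I(t_i)$ on the whole intervention window $[t_i, t_i+\tau]$ directly from the explicit form of $\bopt$ supplied by Theorem \ref{theorem:maintain-then-suppress}; and second, conclude by a squeeze argument, since $I(t_i)$ is one of the values of $I(\cdot)$ and is therefore trapped between the global maximum $\Imax$ (from above) and any value of $I$ attained inside the window (from below, by the first step).

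For the first step I would split the window into the two phases of $\bopt$. On the maintain phase $t \in [t_i, t_i + f\tau)$ we have $\bopt(t) = \tfrac{\gamma}{\beta S}$, so $\bopt(t)\beta S I = \gamma I$ and hence $\dv{I}{t} = \bopt(t)\beta S I - \gamma I = 0$; thus $I(t) \equiv I(t_i)$ there (this is exactly the statement that $\bopt$ holds $\Reff = 1$). On the suppress phase $t \in [t_i + f\tau, t_i + \tau]$ we have $\bopt(t) = 0$, so $\dv{I}{t} = -\gamma I \le 0$, i.e.\ $I$ is non-increasing; by continuity of $I$ (interventions have only finitely many discontinuities, so $I$ is continuous) it enters this phase at the value $I(t_i + f\tau) = I(t_i)$ and then only decreases, so $I(t) \le I(t_i)$ throughout. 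The degenerate intervention identified at the end of the proof of Lemma \ref{lemma:wait-maintain-suppress} --- the one truncated so that $S$ reaches $\Scrit$ during the maintain phase --- is handled by the same two observations verbatim, since it too has $\dv{I}{t} = 0$ on its maintain segment and $\dv{I}{t} \le 0$ afterwards. Combining the two phases yields $I(t) \le I(t_i)$ for all $t \in [t_i, t_i + \tau]$, which is the ``since'' clause of the statement; note that by Corollary \ref{corr:start-with-fuel} we have $S(t_i) > \Scrit$, so the maintain phase is well posed (the required $b \le 1$ is attainable at its start).

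For the second step, observe that $I(t_i)$ is a value taken by $I(\cdot)$ on $[0,\infty)$, hence $I(t_i) \le \Imax$. If $\Imax$ is attained at some time $t^\ast \in [t_i, t_i + \tau]$, then $\Imax = I(t^\ast) \le I(t_i) \le \Imax$ by the inequality just proved, so every inequality is an equality and $\Imax = I(t_i)$. I do not expect a real obstacle here: the only thing to be careful about is that both admissible forms of optimal intervention (the canonical maintain--suppress form and the truncated-maintain variant) are covered, and this is immediate because whenever $\bopt(t) < 1$ it holds $\Reff \le 1$ and hence $\dv{I}{t} \le 0$, with equality precisely on a maintain segment; this is exactly the point at which the modeling constraint $b(t) \in [0,1]$ does the work, since it forbids pushing $\Reff$ above $1$ to make $I$ rebound past $I(t_i)$ before suppression begins.
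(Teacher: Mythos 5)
Your proof is correct and is essentially the argument the paper intends: the corollary follows immediately from the form of $\bopt$ in Theorem \ref{theorem:maintain-then-suppress}, with $\dv{I}{t}=0$ on the maintain phase and $\dv{I}{t}=-\gamma I\le 0$ on the suppress phase, so $I(t)\le I(t_i)$ on the window and any in-window attainment of $\Imax$ forces $\Imax=I(t_i)$. The paper states this without a written proof precisely because the two-phase computation you spell out is immediate.
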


\begin{theorem}[Twin Peaks] 
\label{theorem:twin-peaks}
Let $b_{x}(t)$ be an optimal intervention. Then $I_{x}(t)\le \Imax_x$ for all $t \in (-\infty,t_{f}]$, with equality for $t \in [t_i, t_i + \tau f]$, and furthermore $I_x(t_p) = \Imax_x$ for some $t_p \in [t_f, \infty)$ with $t_p = t_f$ only if $f = 1$. 

That is, if $0 < f < 1$, there will be a plateau during the intervention followed by a peak of equal height that occurs strictly after the intervention finishes. If $f = 0$, there will be two peaks of equal height, one at the start of the intervention and one strictly after it finishes, and if $f = 1$ there will be a plateau during the intervention with no subsequent peak.
\end{theorem}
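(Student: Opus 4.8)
The plan is to reduce the whole statement to a single equality --- that for the globally optimal intervention the maintenance plateau height $I(t_i)$ equals the post-intervention rebound peak $\Imax(t_f)$, with $t_f = t_i + \tau$ and $\Imax(t_f)$ as in equation~\ref{eqn:imax_in_Si} --- and then read off the trichotomy. First I would pin down the shape of $I_x$ from results already in hand: by Theorem~\ref{theorem:maintain-then-suppress} the optimal $b_x$ is maintain-then-suppress; by Corollary~\ref{corr:start-with-fuel}, $S(t_i) > \Scrit$; by Lemma~\ref{lemma:dont-be-late} the global maximum is not attained before $t_i$. Hence $I_x$ is strictly increasing on $(-\infty, t_i)$ (there $S > \Scrit$), constant at $I(t_i)$ on the maintenance interval $[t_i, t_i + f\tau)$ (there $\Reff = 1$), non-increasing on the suppression interval $[t_i + f\tau, t_f]$ (there $\Reff = 0$), and thereafter governed by the intervention-free SIR flow from $(S(t_f), I(t_f))$, in which it attains $\Imax(t_f)$ at some $t_p \ge t_f$. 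Therefore $\Imax_x = \max\{I(t_i),\, \Imax(t_f)\}$. Granting the equality $I(t_i) = \Imax(t_f)$: then $\Imax_x = I(t_i)$, so $I_x(t) \le \Imax_x$ for every $t \le t_f$ with equality exactly on $[t_i, t_i + f\tau]$, and $\Imax_x = \Imax(t_f)$ supplies a $t_p \ge t_f$ with $I_x(t_p) = \Imax_x$. Since $I(t_f) = I(t_i)\,e^{-\gamma\tau(1-f)}$ by equation~\ref{eqn:opt-i-final}, we have $I_x(t_f) = \Imax_x$ iff $f = 1$; if $f = 1$, then $\Imax(t_f) = I(t_f)$ forces $S(t_f) \le \Scrit$, so $I_x$ is non-increasing after $t_f$ --- a plateau with no later peak ($t_p = t_f$ is then permitted); if $0 < f < 1$, then $I_x(t_f) < \Imax_x = \Imax(t_f)$ forces $S(t_f) > \Scrit$, so $I_x$ strictly grows past $t_f$ and $t_p > t_f$; and $f = 0$ is the degenerate case with no plateau, a peak at $t_i$ and another strictly after $t_f$. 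This is exactly the claimed trichotomy.

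It remains to prove $I(t_i) = \Imax(t_f)$, and I would do this by perturbing the optimum. By Theorem~\ref{theorem:maintain-then-suppress} the competing interventions are parametrized by $u := S(t_i) \in (\Scrit, 1)$ and $f \in [0,1]$, with $I(t_i) = I_0(u) := 1 - u + \tfrac{1}{\Ro}\log u$ the intervention-free relation of equation~\ref{eqn:S_I_relation} (so $I_0' \in (-1,0)$ on $(\Scrit,1)$), with $S(t_f) = P(u,f) := u - \gamma\tau f\,I_0(u)$ and $I(t_f) = I_0(u)\,e^{-\gamma\tau(1-f)}$ by equations~\ref{eqn:opt-s-final}--\ref{eqn:opt-i-final}, and with feasibility (i.e.\ $b(t) \le 1$ throughout maintenance) equivalent to $P(u,f) \ge \Scrit$. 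Write $g_1(u) := I_0(u)$ for the plateau height and $g_2(u,f) := \Imax(t_f)$ from equation~\ref{eqn:imax_in_Si}; both are continuous (Remark~\ref{rem:continuity-of-I-max-candidates}), and by the previous paragraph $\Imax_x = \max\{g_1, g_2\}$, so the optimum $(u^\ast, f^\ast)$ minimizes $\max\{g_1, g_2\}$ over the feasible region. Suppose, for contradiction, that $g_1(u^\ast) \ne g_2(u^\ast, f^\ast)$.

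If $g_1(u^\ast) > g_2(u^\ast, f^\ast)$, I increase $u$ slightly with $f$ fixed. Then $g_1 = I_0(u)$ strictly decreases, the strict inequality $g_1 > g_2$ survives by continuity, and $P$ strictly increases (since $\partial_u P = 1 - \gamma\tau f\,I_0'(u) \ge 1$), so the perturbation stays feasible; hence $\max\{g_1, g_2\}$ decreases, contradicting optimality. If instead $g_2(u^\ast, f^\ast) > g_1(u^\ast)$, then $I(t_f) \le I_0(u^\ast) < g_2$, so $S(t_f) > \Scrit$; feasibility is slack and $g_2$ is smooth near $(u^\ast,f^\ast)$. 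When $f^\ast > 0$, I perturb $(u,f)$ along the level set $\{P = P(u^\ast, f^\ast)\}$ with $df < 0$ (this also decreases $u$); on that set $g_2$ differs from $I(t_f) = I_0(u)\,e^{-\gamma\tau(1-f)}$ by an additive constant, and a short computation shows the directional derivative of $I(t_f)$ has the sign of $df$ --- the governing factor is $I_0'(u) + \partial_u P = 1 + I_0'(u)\big(1 - \gamma\tau f\big)$, which is positive because $I_0' \in (-1,0)$ --- so $g_2$ strictly decreases while $P$ (hence feasibility) is held and, for a short step, $g_2 > g_1$ persists: contradiction. When $f^\ast = 0$, equation~\ref{eqn:imax_in_Si} collapses to $g_2 = M - I_0(u^\ast)\big(1 - e^{-\gamma\tau}\big)$ with $M := 1 - \tfrac{1}{\Ro} + \tfrac{1}{\Ro}\log\tfrac{1}{\Ro}$ the unmitigated peak (equation~\ref{eqn:Imax-t} at $S \to 1$, $I \to 0$); this is strictly increasing in $u^\ast$, so decreasing $u$ slightly (intervening later) strictly decreases $g_2$, stays feasible, and keeps $g_2 > g_1$ for a short step: contradiction. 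Hence $g_1(u^\ast) = g_2(u^\ast, f^\ast)$, and the theorem follows.

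The main obstacle is the case $g_2 > g_1$: a one-parameter perturbation of the start time alone moves $S(t_f)$ and $I(t_f)$ in opposite directions, and since $\Imax(t_f)$ is increasing in each of them ($\partial_S \Imax = 1 - \tfrac{1}{\Ro S} > 0$, $\partial_I \Imax = 1$, from equation~\ref{eqn:Imax-t}), its effect on the rebound peak is ambiguous. The remedy is the two-parameter move along a level set of $S(t_f)$, which isolates the clean monotone dependence of $\Imax(t_f)$ on $I(t_f)$ when $f > 0$, together with the separate, fully explicit treatment of the $f = 0$ boundary. Verifying the sign of the level-set derivative, and that the feasibility constraint is inactive precisely when $g_2 > g_1$, is the only genuine computation.
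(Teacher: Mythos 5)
Your proof is correct, and while it shares the paper's overall skeleton (assume optimality, split into the cases ``plateau too high,'' ``rebound too high with $f>0$,'' and ``rebound too high with $f=0$,'' and derive a contradiction by perturbation), the middle case is handled by a genuinely different mechanism. The paper perturbs the \emph{intervention} dynamically: it inserts a short do-nothing window of length $\epsilon$ so that maintenance occurs at a slightly higher prevalence, chooses $f_y$ so that $I(t_f)$ is unchanged, and then invokes Lemmas \ref{lemma:more_infected_less_susceptible} and \ref{lemma:fuel-fire} to conclude that $S(t_f)$ drops and hence the rebound peak drops --- i.e.\ it lowers $\Imax(t_f)$ by moving $S(t_f)$ at fixed $I(t_f)$. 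You instead reduce everything to the explicit two-variable problem of minimizing $\max\{g_1,g_2\}$ over $(u,f)=(S(t_i),f)$ using equations \ref{eqn:opt-s-final}--\ref{eqn:imax_in_Si}, and in the $f>0$ case you perturb along a level set of $S(t_f)$, lowering $I(t_f)$ at fixed $S(t_f)$ --- the dual move, justified by the sign computation $1+I_0'(u)(1-\gamma\tau f)>0$ (which indeed holds for all $\gamma\tau f\ge 0$ since $I_0'\in(-1,0)$, not only for $\gamma\tau f\le 1$ as your one-line justification suggests --- worth a sentence). Your $f=0$ case makes explicit the computation ($g_2=M-I_0(u)(1-e^{-\gamma\tau})$) that the paper states only qualitatively, which is a small improvement in rigor. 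What the paper's route buys is independence from the closed-form expressions --- it works directly with the comparison lemmas and so generalizes more readily; what your route buys is that the whole theorem becomes a transparent finite-dimensional minimax argument with checkable derivatives, and it cleanly isolates the single identity $I(t_i)=\Imax(t_f)$ as the content of the theorem. The only shared soft spot is the implicit assumption that $u^\ast$ is not at the boundary $u=S(0)$ when you increase $u$ in the first case; the paper's ``start at $t_i-\epsilon$'' has exactly the same issue, so this is not a new gap.
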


\begin{proof}
Let $b_{x}$ be an optimal intervention of the form given by equation \ref{eqn:optimal-supp}. By Lemma \ref{lemma:dont-be-late}, $\Imax_{x}$ must occur during or after the intervention. First let us assume that the $\Imax_{x}$ occurs during the intervention and is never again attained after the intervention. If $f=1$, this implies that the whole intervention is a plateau and no further peaks occur. For $f<1$ we can build a new intervention $b_{y}$ of the same form as $b_x$ but that starts at $t_i-\epsilon$ rather than at $t_i$. From the continuity of $\Imax$ in $t_i$ for any intervention, it follows that for some $\epsilon$ small enough, $\Imax_y(t_f)$ (the post-intervention maximum of $I(t)$) must still be smaller than $I^\tau_y = I(t_i - \epsilon)$ (the maximum during the intervention), but because $I(t_i-\epsilon)<I(t_i)$, Corollary \ref{corr:peak-early} implies that our new $b_y$ outperforms $b_x$, which contradicts the optimality of $b_x$.

Now let us assume that $\Imax_{x}$ occurs after the intervention and is larger than any value of $I_x$ during the intervention. If $f>0$, we can once again build a new intervention function $b_y$

\begin{equation}
\begin{aligned}
b_y(t)=  \begin{cases}
	1 & t\in[t_{i},t_{i}+\epsilon) \\
    \frac{\gamma}{\beta S}, & t\in[t_{i}+\epsilon,t_{i}+f_y\tau) \\
	0, & t\in[t_{i}+f_y\tau,t_{i}+\tau] \\
  \end{cases}
\end{aligned}
\end{equation}

With $f_y$ chosen such that $(1-f_y)\tau=\frac{1}{\gamma}\log{\frac{I_{x}(t_i+\epsilon)}{I_x(t_f)}}$. For sufficiently small $\epsilon$, $I_y(t_{i}+\epsilon)<\Imax_x$. Also, following Lemma \ref{lemma:wait-maintain-suppress}, $\Imax_y < \Imax_x$ and therefore $b_y$ outperforms $b_x$, which once again contradicts the optimality of $b_x$.

Finally, if $f = 0$, we build yet another $b_y$ of the same form as $b_x$ but that starts at $t_i+\epsilon$ rather than at $t_i$. Because intervention $b_y$ starts out with a smaller susceptible fraction than intervention $b_x$, and an increase in the infected fraction that is smaller than the susceptible fraction decrease, it follows from equation \ref{eqn:Imax-t} that $\Imax_y < \Imax_x$ and therefore $b_y$ outperforms $b_x$, which once again contradicts the optimality of $b_x$.
\end{proof}

\begin{corollary}[The longer the intervention, the earlier it starts]
\label{corr:longer_earlier}
If $t_i^{opt}(\tau)$ is the starting time of the optimal intervention with duration $\tau$, let $S_i^{opt}(\tau)=S(t_i^{opt}(\tau))$. Then $S_i^{opt}(\tau)$ is a non-decreasing function of $\tau$.
\end{corollary}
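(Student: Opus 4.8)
The plan is to recast the statement about $S_i^{opt}$ as a statement about the smallest peak prevalence one can achieve with a given budget of time, which is monotone for an essentially trivial reason, and then to pull that monotonicity back onto $S_i$ using the twin-peaks structure. Write $M(\tau)$ for the minimal value of $\Imax$ over all interventions of duration $\tau$ (a minimum, by Theorem \ref{theorem:maintain-then-suppress}), and let $t_i^{opt}(\tau)$ be the start time of an optimal such intervention, so that $S_i^{opt}(\tau) = S(t_i^{opt}(\tau))$.

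First I would show $M$ is non-increasing in $\tau$. If $\tau_1 \le \tau_2$, then any intervention $b$ of duration $\tau_1$ beginning at $t_i$ is also an intervention of duration $\tau_2$ beginning at $t_i$: widening the admissible window from $[t_i, t_i + \tau_1]$ to $[t_i, t_i + \tau_2]$ while keeping $b \equiv 1$ on the extra interval $(t_i + \tau_1, t_i + \tau_2]$ leaves $b$ right-continuous with finitely many discontinuities and equal to $1$ off $[t_i, t_i + \tau_2]$. So the feasible set for duration $\tau_2$ contains that for duration $\tau_1$, and $M(\tau_2) \le M(\tau_1)$.

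Next I would identify $M(\tau)$ with a function of $S_i^{opt}(\tau)$ alone. By Theorem \ref{theorem:twin-peaks}, an optimal intervention attains the global peak at the very instant it begins (on the maintenance plateau when $f>0$, at the first of the two equal peaks when $f=0$), so $M(\tau) = I(t_i^{opt}(\tau))$. Since $b \equiv 1$ before $t_i^{opt}(\tau)$ and the epidemic begins wholly susceptible, equation \ref{eqn:S_I_relation} holds at $t_i^{opt}(\tau)$, whence $M(\tau) = \phi(S_i^{opt}(\tau))$ where $\phi(S) = 1 - S + \frac{1}{\Ro}\log S$. As $\phi'(S) = -1 + \frac{1}{\Ro S} < 0$ for $S > \Scrit = \frac{1}{\Ro}$, the function $\phi$ is strictly decreasing on $(\Scrit, 1]$, and by Corollary \ref{corr:start-with-fuel} each $S_i^{opt}(\tau)$ lies in $(\Scrit, 1]$. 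Combining the two steps: for $\tau_1 \le \tau_2$ we get $\phi(S_i^{opt}(\tau_2)) = M(\tau_2) \le M(\tau_1) = \phi(S_i^{opt}(\tau_1))$, and strict monotonicity of $\phi$ on the relevant interval forces $S_i^{opt}(\tau_2) \ge S_i^{opt}(\tau_1)$, as desired.

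The only place needing genuine care is the identity $M(\tau) = I(t_i^{opt}(\tau))$: it combines the twin-peaks theorem (the maximum of $I$ during an optimal intervention equals the global maximum) with Lemma \ref{lemma:dont-be-late} and Corollary \ref{corr:start-with-fuel} (so that the system is still on its pre-intervention trajectory at $t_i^{opt}(\tau)$, validating equation \ref{eqn:S_I_relation} there); everything else is bookkeeping. If one wanted the sharper conclusion that $S_i^{opt}$ is strictly increasing, it would suffice to strengthen the first step to $M(\tau_2) < M(\tau_1)$, which follows by padding an optimal duration-$\tau_1$ intervention with extra suppression and then shifting its start time slightly earlier, exactly as in the proof of Theorem \ref{theorem:twin-peaks}.
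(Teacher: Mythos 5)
Your proposal is correct and follows essentially the same route as the paper's own (much terser) proof: longer optimal interventions achieve a lower minimal peak, Theorem \ref{theorem:twin-peaks} identifies that peak with $I(t_i^{opt})$, and the pre-intervention relation $I = 1 - S + \frac{1}{\Ro}\log S$ (strictly decreasing in $S$ on $(\Scrit,1]$) converts monotonicity in the peak into monotonicity of $S_i^{opt}$. You have merely made explicit two steps the paper leaves implicit --- the nesting of feasible sets behind ``a longer optimal intervention must outperform a shorter one'' and the use of equation \ref{eqn:S_I_relation} with Corollary \ref{corr:start-with-fuel} to pass from $I_i$ to $S_i$ --- which is a faithful elaboration rather than a different argument.
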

\begin{proof}
Trivially a longer optimal intervention must outperform a shorter optimal intervention. But by Theorem \ref{theorem:twin-peaks} $I_i^\mathrm{opt}=\Imax$ for any optimal intervention, which implies that $I_i^\mathrm{opt}(\tau)$ is a non-increasing function of $\tau$ and $S_i^\mathrm{opt}(\tau)$ is a non-decreasing function of $\tau$.
\end{proof}

\begin{corollary}[If you don't have much gunpowder, don't shoot until you see the whites of their eyes]
\label{corr:bunker-hill}
Given the optimal intervention of duration $\tau$, $\bopt^{\tau}$, denote the initial time of that intervention by $t_{i}^{\tau}$. As $\tau \to 0$, $t_{i}^{\tau} \to t_{crit}$.
\end{corollary}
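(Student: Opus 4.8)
The plan is to reduce the claim to a statement about the susceptible fraction at the start of the optimal intervention, and then to pin that fraction down using the twin-peaks characterization. Write $S_i^\mathrm{opt}(\tau) = S(t_i^\tau)$. By Corollary~\ref{corr:longer_earlier}, $S_i^\mathrm{opt}(\tau)$ is non-decreasing in $\tau$, and by Corollary~\ref{corr:start-with-fuel} it satisfies $S_i^\mathrm{opt}(\tau) > \Scrit$ for every $\tau > 0$; a bounded monotone function has a limit, so $L := \lim_{\tau \to 0^+} S_i^\mathrm{opt}(\tau)$ exists with $L \ge \Scrit$. Since $S(t)$ is continuous and strictly decreasing along the uncontrolled trajectory on $(-\infty, t_\mathrm{crit}]$ (the epidemic is uncontrolled before $t_i^\tau$), this is equivalent to the existence of $\lim_{\tau\to 0^+} t_i^\tau =: t_L \le t_\mathrm{crit}$ with $S(t_L) = L$. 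It therefore suffices to show $L = \Scrit$.

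Suppose for contradiction that $L > \Scrit$. By Theorem~\ref{theorem:twin-peaks}, every optimal intervention of duration $\tau$ satisfies $I(t_i^\tau) = \Imax = \Imax(t_f^\tau)$ with $t_f^\tau = t_i^\tau + \tau$. From equations~\ref{eqn:opt-s-final} and~\ref{eqn:opt-i-final}, $S(t_f^\tau) = S(t_i^\tau) - \gamma\tau f I(t_i^\tau)$ and $I(t_f^\tau) = I(t_i^\tau)e^{-\gamma\tau(1-f)}$; since $f \in [0,1]$ and $\tau \to 0^+$, we get $S(t_f^\tau) \to L$ and $I(t_f^\tau) \to I(L)$. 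In particular $S(t_f^\tau) > \Scrit$ for all small enough $\tau$, so the peak formula~\ref{eqn:Imax-t} applies at $t_f^\tau$ and gives
\begin{equation*}
I(t_i^\tau) = I(t_f^\tau) + S(t_f^\tau) - \tfrac{1}{\Ro}\log S(t_f^\tau) - \tfrac{1}{\Ro} + \tfrac{1}{\Ro}\log\tfrac{1}{\Ro}.
\end{equation*}
Because the epidemic is uncontrolled and wholly susceptible at the outset, equation~\ref{eqn:S_I_relation} gives $I(t_i^\tau) = 1 - S(t_i^\tau) + \tfrac{1}{\Ro}\log S(t_i^\tau) \to I(L) = 1 - L + \tfrac{1}{\Ro}\log L$. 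Letting $\tau \to 0^+$ in the displayed identity and cancelling $I(L)$ from both sides yields $g(L) = 0$, where $g(S) := S - \tfrac{1}{\Ro}\log S - \tfrac{1}{\Ro} + \tfrac{1}{\Ro}\log\tfrac{1}{\Ro}$.

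It remains to observe that $g$ has no zero in $(\Scrit, 1]$: a direct computation gives $g(\Scrit) = 0$ and $g'(S) = 1 - \tfrac{1}{\Ro S} > 0$ for $S > \Scrit = \tfrac{1}{\Ro}$, so $g$ is strictly increasing past $\Scrit$ and hence $g(L) > 0$ whenever $L > \Scrit$ --- contradicting $g(L) = 0$. Thus $L = \Scrit$, i.e.\ $t_i^\tau \to t_\mathrm{crit}$ as $\tau \to 0$. I expect the only subtle point to be the passage to the limit in the peak formula: one must make sure the closed form~\ref{eqn:Imax-t} for $\Imax(t_f^\tau)$ is the applicable one (as opposed to the trivial $\Imax(t_f^\tau) = I(t_f^\tau)$ that holds once $S(t_f^\tau) \le \Scrit$), and the hypothesis $L > \Scrit$ is precisely what guarantees $S(t_f^\tau) > \Scrit$ for small $\tau$; everything else is routine bookkeeping with continuous functions.
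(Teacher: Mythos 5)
Your proof is correct. It rests on the same two pillars as the paper's own argument---Theorem \ref{theorem:twin-peaks} (which forces $I(t_i^\tau)=\Imax=\Imax(t_f^\tau)$) together with the closed-form peak expression (equation \ref{eqn:Imax-t}) and the fact that the intervention's effect vanishes as $\tau\to 0$---but you organize it differently. The paper works in the $I$ coordinate: it reads off from the continuity of equation \ref{eqn:imax_in_Si} in $\tau$ that $\Imax_1-\Imax_\mathrm{opt}\to 0$, concludes $I_\mathrm{opt}(t_i^\tau)\to I^{max}_1$, and inverts the strictly increasing pre-peak trajectory to get $t_i^\tau\to t_\mathrm{crit}$. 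You work in the $S$ coordinate: you first secure the existence of $L=\lim_{\tau\to 0^+}S_i^\mathrm{opt}(\tau)$ from the monotonicity in Corollary \ref{corr:longer_earlier} together with the bound of Corollary \ref{corr:start-with-fuel}, and then show that the twin-peaks identity forces $g(L)=0$, where $g(S)=S-\frac{1}{\Ro}\log S-\frac{1}{\Ro}+\frac{1}{\Ro}\log\frac{1}{\Ro}$ vanishes only at $\Scrit$ on $[\Scrit,1]$. What your route buys is rigor at the one place the paper is terse: the claim that the optimized peak reduction tends to zero requires uniformity over the $\tau$-dependent optimal $S_i$ and $f$, which the paper leaves implicit, whereas your term-by-term passage to the limit is justified by the crude bounds $f\in[0,1]$ and $I\le 1$, with Corollary \ref{corr:longer_earlier} supplying existence of the limit. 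Two cosmetic remarks: you do not actually need the explicit expression $I(t_i^\tau)=1-S(t_i^\tau)+\frac{1}{\Ro}\log S(t_i^\tau)$, only that $I(t_i^\tau)$ converges (immediate from continuity of $I$ as a function of $S$ along the uncontrolled trajectory), since $I(L)$ cancels; and the monotonicity step could be replaced by a compactness/subsequence argument, since your contradiction rules out every subsequential limit in $(\Scrit,1]$.
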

\begin{proof}
From equation \ref{eqn:imax_in_Si} we can see that $\Imax_\mathrm{opt}$ is a continuous function of $\tau$, and that as $\tau \to 0$, the effect of intervention defined as $\Imax_{1}-\Imax_\mathrm{opt}\to 0$. We know from Theorem \ref{theorem:twin-peaks} that $I_\mathrm{opt}(t_i)=\Imax_\mathrm{opt}$, and therefore, as $\tau \to 0$, $I_\mathrm{opt}(t_i) \to I^{max}_{1}$, which implies $t_{i}^{\tau} \to t_\mathrm{crit}$.
\end{proof}

\begin{corollary}[No need to burn all the fuel] 
\label{corr:f1}
After an optimal intervention, $S(t_f) \ge \Scrit$, with equality if and only if $f = 1$.
\end{corollary}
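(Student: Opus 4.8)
The statement splits cleanly into the inequality $S(t_f)\ge\Scrit$, which is purely structural, and the equality characterization, which I would extract from the twin-peak description in Theorem~\ref{theorem:twin-peaks}.

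For the inequality I would take the optimal intervention in the form of equation~\ref{eqn:optimal-supp}. On the maintenance phase $t\in[t_i,t_i+f\tau)$ one has $b(t)=\frac{\gamma}{\beta S(t)}$, and since an intervention function is valued in $[0,1]$ this forces $\beta S(t)\ge\gamma$, i.e.\ $S(t)\ge\frac{\gamma}{\beta}=\Scrit$ throughout the maintenance phase (this is exactly the viability constraint flagged in the proof of Lemma~\ref{lemma:wait-maintain-suppress}). On the suppression phase $t\in[t_i+f\tau,t_f]$ one has $b(t)=0$, hence $\dv{S}{t}=0$ and $S$ is constant. Therefore $S(t_f)=S(t_i+f\tau)\ge\Scrit$ when $f>0$; in the degenerate case $f=0$ the maintenance phase is empty and instead $S(t_f)=S(t_i)>\Scrit$ by Corollary~\ref{corr:start-with-fuel}.

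For the equality claim I would use that after $t_f$ the intervention is over, so $b\equiv1$ and $\dv{I}{t}=\gamma I(\Ro S-1)$, which is positive exactly when $S>\Scrit$; note also $\dv{S}{t}=-\beta SI\le0$ there, so $S$ is non-increasing for $t\ge t_f$. Now split on $f$. If $f<1$, Theorem~\ref{theorem:twin-peaks} supplies a time $t_p>t_f$ with $I(t_p)=\Imax$, while Theorem~\ref{theorem:twin-peaks} together with Corollary~\ref{corr:peak-early} and equation~\ref{eqn:opt-i-final} gives $\Imax=I(t_i)$ and $I(t_f)=I(t_i)e^{-\gamma\tau(1-f)}<I(t_i)=\Imax$. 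If we had $S(t_f)\le\Scrit$, then $S(t)\le\Scrit$ and so $I(t)\le I(t_f)<\Imax$ for every $t\ge t_f$, contradicting $I(t_p)=\Imax$; hence $S(t_f)>\Scrit$. If instead $f=1$, Theorem~\ref{theorem:twin-peaks} says the intervention is a plateau at height $\Imax$ with no subsequent peak, i.e.\ $I(t)\le I(t_f)$ for all $t\ge t_f$; this is incompatible with $I$ increasing just past $t_f$, so $S(t_f)\le\Scrit$, and combined with the inequality above $S(t_f)=\Scrit$. Together the two cases give $S(t_f)\ge\Scrit$ with equality if and only if $f=1$.

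The step I expect to need the most care is the $f=1$ direction, since it hinges on reading ``no subsequent peak'' in Theorem~\ref{theorem:twin-peaks} as ``$I$ is non-increasing after $t_f$''. This is the right reading --- any rise of $I$ past $t_f$ must terminate in a genuine local maximum once $S$ falls below $\Scrit$ --- but if a self-contained argument is preferred, one can argue directly: by equation~\ref{eqn:Imax-t}, whenever $S(t_f)\ge\Scrit$ the post-intervention peak is $\Imax(t_f)=I(t_f)+g(S(t_f))-g(\Scrit)$ with $g(S)=S-\frac{1}{\Ro}\log S$ strictly increasing on $[\Scrit,\infty)$, so if $S(t_f)>\Scrit$ then sliding the maintain-to-suppress switch slightly earlier (decreasing $f$) lowers $\Imax(t_f)$ to first order while leaving the in-intervention maximum at $I(t_i)$, contradicting optimality; this recovers $S(t_f)=\Scrit$ without appealing to the $f=1$ clause of Theorem~\ref{theorem:twin-peaks}.
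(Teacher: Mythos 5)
Your proof is correct and follows essentially the same route as the paper: invoke Theorem \ref{theorem:twin-peaks} to get $I(t_i)=\Imax$, then split on $f<1$ (where $I(t_f)<\Imax$ forces $S(t_f)>\Scrit$ for the second peak to be reached) versus $f=1$ (where any $S(t_f)>\Scrit$ would produce a forbidden higher peak). Your added structural observation that the viability constraint $b(t)=\frac{\gamma}{\beta S}\le 1$ already forces $S(t_f)\ge\Scrit$ is a small but welcome tightening that the paper leaves implicit in the $f=1$ case.
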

\begin{proof}
By Theorem \ref{theorem:twin-peaks}, $I(t_i) = \Imax(t_f) = \Imax$. If $f < 1$, we must have $I(t_f) < I(t_i) = \Imax$, so in order for the epidemic to reach $\Imax(t_f) = \Imax$, we must have $S(t_f) > \Scrit$. If $f = 1$, then $I(t_f) = I(t_i) = \Imax$, so then must have $S(t_f) = \Scrit$, otherwise the epidemic would grow to a peak above $I(t_i) = I(t_f)$, which would be a contradiction of Theorem \ref{theorem:twin-peaks}.
\end{proof}

\begin{corollary}[Putting out existing fire can only do so much]
\label{corr:full_supression_optima}
Consider a full suppression intervention of duration $\tau$ defined by $b_0(t)=0$ for all $t \in [t_i,t_i+\tau]$. For every $\tau$ there is a $t_i$ that minimizes $\Imax_0$. Consider these optimized full suppression interventions. Then, as $\tau \to \infty$, the maximum infectious prevalence $\Imax_0\to \frac{1}{2}+\frac{1}{2\Ro}\Big(\log \left(\frac{1}{\Ro}\right)-1\Big)$. In other words, full suppression interventions have a limit in how much they can reduce $\Imax$.
\end{corollary}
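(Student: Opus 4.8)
The plan is to reduce everything to the trivial dynamics of full suppression and to the \emph{twin-peak} balancing condition that an optimized start time must satisfy. Fix $\tau>0$ and a candidate start time $t_i$, and write $S_i=S(t_i)$, $I_i=I(t_i)$. During $[t_i,t_i+\tau]$ the model becomes $\dv{S}{t}=0$ and $\dv{I}{t}=-\gamma I$, so $S(t_f)=S_i$ and $I(t_f)=I_ie^{-\gamma\tau}$; since $I$ never grows on $[t_i,t_f]$, the largest value of $I$ over $(-\infty,t_f]$ is $I_i$, and therefore $\Imax_0(t_i)=\max\{\,I_i,\ \Imax(t_f)\,\}$. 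As in Lemma \ref{lemma:dont-be-late} I would discard start times with $S_i\le\Scrit$, where the epidemic has already peaked, the intervention is useless, and $\Imax_0=\Imax_1$; so assume $t_i\in(-\infty,t_\text{crit})$, i.e.\ $S_i>\Scrit$. Then equation \ref{eqn:Imax-t} applies at $t_f$, giving $\Imax(t_f)=I_ie^{-\gamma\tau}+\Psi(S_i)$ with $\Psi(S):=S-\tfrac1{\Ro}\log S-\tfrac1{\Ro}+\tfrac1{\Ro}\log\tfrac1{\Ro}$, and it applies equally at $t_i$ on the not-yet-intervened trajectory, where it reads $I_i+\Psi(S_i)=\Imax_1$. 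Subtracting eliminates $\Psi(S_i)$:
\[
\Imax(t_f)=\Imax_1-I_i\bigl(1-e^{-\gamma\tau}\bigr),
\qquad\text{hence}\qquad
\Imax_0(t_i)=\max\bigl\{\,I_i,\ \Imax_1-I_i(1-e^{-\gamma\tau})\,\bigr\}.
\]

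Next I would minimize over $t_i$. On $(-\infty,t_\text{crit})$ the quantity $I_i=I(t_i)$ is continuous and strictly increasing in $t_i$ (because $\dv{I}{t}=\gamma I(\Ro S-1)>0$ there) and ranges over all of $(0,\Imax_1)$; by continuity (Remarks \ref{remark:continuity-of-imax-t} and \ref{rem:continuity-of-I-max-candidates}) so is $\Imax_0(t_i)$. Thus minimizing $\Imax_0$ over $t_i$ amounts to minimizing $u\mapsto\max\{\,u,\ \Imax_1-u(1-e^{-\gamma\tau})\,\}$ over $u\in(0,\Imax_1)$. This is the maximum of an increasing and a decreasing affine function of $u$, so it is uniquely minimized where the two agree, namely at $u=\Imax_1/(2-e^{-\gamma\tau})$, which indeed lies strictly inside $(0,\Imax_1)$ since $2-e^{-\gamma\tau}\in(1,2)$. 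Hence a minimizing $t_i$ exists; it is exactly the one with $I(t_i)=\Imax(t_f)$ (the twin-peak property, as in Theorems \ref{theorem:twin-peaks} and \ref{theorem:twin-peaks-fixed}); and the optimized full suppression intervention satisfies
\[
\Imax_0=\frac{\Imax_1}{2-e^{-\gamma\tau}}.
\]

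It then only remains to let $\tau\to\infty$: then $e^{-\gamma\tau}\to0$, so $\Imax_0\to\tfrac12\Imax_1$, and substituting $\Imax_1=1-\tfrac1{\Ro}+\tfrac1{\Ro}\log\tfrac1{\Ro}$ (equation \ref{eqn:S_I_relation} at $S=\Scrit=\tfrac1{\Ro}$) yields the claimed limit $\tfrac12+\tfrac1{2\Ro}\bigl(\log\tfrac1{\Ro}-1\bigr)$. The step I expect to need the most care is obtaining the closed form for $\Imax_0(t_i)$: one must check that the optimizer really falls in the region $S_i>\Scrit$ where equation \ref{eqn:Imax-t} is valid, and that no peak of $I$ is hidden inside $[t_i,t_f]$ so that $\max\{I_i,\Imax(t_f)\}$ is the whole story. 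The monotone decay of $I$ during full suppression and the Lemma \ref{lemma:dont-be-late} reasoning dispose of both concerns; alternatively one may simply cite the companion twin-peak theorem for optimized fixed-control and full-suppression interventions. The remainder is routine algebra.
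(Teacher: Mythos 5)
Your proposal is correct, and it reaches the limit by the same underlying computation as the paper---equation \ref{eqn:Imax-t} evaluated at $t_f$ with $S(t_f)=S(t_i)$, combined with equation \ref{eqn:S_I_relation} at $t_i$---but it organizes the argument differently in one genuinely useful way. The paper's proof imports the existence of an optimal $t_i$ and the balancing condition $I_0(t_i)=\Imax_0$ from the $f=0$ case of Theorem \ref{theorem:twin-peaks}, and then only extracts the $\tau\to\infty$ limit. You instead eliminate $\Psi(S_i)$ via the identity $I_i+\Psi(S_i)=\Imax_1$ to get $\Imax_0(t_i)=\max\{\,I_i,\ \Imax_1-I_i(1-e^{-\gamma\tau})\,\}$, a one-dimensional minimax in $u=I_i$ whose unique minimizer is where the increasing and decreasing branches cross. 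This simultaneously proves existence of the optimal start time, re-derives the twin-peak balancing for this class without appeal to the general theorem, and yields the exact finite-$\tau$ formula $\Imax_0=\Imax_1/(2-e^{-\gamma\tau})$, of which the corollary's limit is the $e^{-\gamma\tau}\to 0$ specialization. The closed form is strictly more information than the paper states (it quantifies the plateau in Fig.~\ref{fig:intervention}g), and your handling of the boundary cases ($S_i\le\Scrit$ discarded as weakly dominated; $S(t_f)=S_i>\Scrit$ so equation \ref{eqn:Imax-t} applies; no hidden peak inside $[t_i,t_f]$ since $I$ decays there) closes the only places where the reduction could fail. The one implicit assumption worth a sentence is that $I(t_i)$ sweeps the full interval $(0,\Imax_1)$ as $t_i$ ranges over $(-\infty,t_\text{crit})$, which holds for the emerging-pathogen initial condition $I(0)\to 0$ used throughout the paper.
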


\begin{proof}
From the proof of Theorem \ref{theorem:twin-peaks} for $f=0$, it follows that full suppression interventions have an optimal start time $t_i$, and that $I_0(t_i)=\Imax_0$. Also, because no new infections occur during a full suppression intervention, $S_0(t_i)=S_0(t_f)$ so substituting into equation \ref{eqn:Imax-t}

\begin{equation}
\Imax_0(t_i) = 
I_0(t_f) + S_0(t_i) - \frac{1}{\Ro} \log\Big(S_0(t_i)\Big) - \frac{1}{\Ro} + \frac{1}{\Ro} \log\Big(\frac{1}{\Ro}\Big)
\end{equation}

We the further use equation \ref{eqn:S_I_relation} to substitute $S_0(t_i)$ and take $\tau \to \infty$ such that $I_0(t_f)\to 0$ which finally yields

\begin{equation}
\Imax_0= 
\frac{1}{2}+\frac{1}{2\Ro}\Big(\log \left(\frac{1}{\Ro}\right) - 1\Big)
\end{equation}

\end{proof}

\begin{corollary}[But when in doubt, put out the fire]
\label{corr:but-just-put-it-out}
Let $b_0$ be the optimized full suppression intervention with duration $\tau$ and starting time $t_i^0$. For a full suppression intervention $b_{\hat{0}}$ that also has duration $\tau$ but that has a starting time $t_i^{\hat{0}} \in [t_i^0,t_{crit}]$, the infected fraction peak is $\Imax_{\hat{0}}=I_{\hat{0}}(t_i^{\hat{0}})$, moreover no intervention starting at that same time can attain a lower peak.
\end{corollary}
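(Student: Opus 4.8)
The plan is to prove the two assertions of the corollary in turn: first that a full suppression intervention begun at any $t_i^{\hat{0}} \in [t_i^0, t_{crit}]$ has its peak prevalence exactly at its start, and then that no intervention begun at that same time can do any better.

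For the first part I would argue as follows. Write $I_1, S_1$ for the uncontrolled (null‑intervention) trajectory. Any intervention has $b \equiv 1$ before it begins, so $b_{\hat{0}}$ agrees with the uncontrolled run on $(-\infty, t_i^{\hat{0}}]$; and since $t_i^{\hat{0}} \le t_{crit}$ we have $S_1(t_i^{\hat{0}}) \ge \Scrit$, so $I_1$ is non‑decreasing there and $I_{\hat{0}}$ attains its maximum over $(-\infty, t_i^{\hat{0}}]$, namely $I_{\hat{0}}(t_i^{\hat{0}}) = I_1(t_i^{\hat{0}})$, at $t_i^{\hat{0}}$. During the suppression phase $b \equiv 0$ freezes $S$ at $S_1(t_i^{\hat{0}})$ and gives $\frac{dI}{dt} = -\gamma I$, hence $I_{\hat{0}}(t_f) = I_1(t_i^{\hat{0}}) e^{-\gamma\tau}$ and $I_{\hat{0}}(t_i^{\hat{0}})$ is also the maximum over $[t_i^{\hat{0}}, t_f]$. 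What remains is the post‑intervention peak $\Imax(t_f)$ (the maximum of $I$ over $[t_f,\infty)$): by equation \ref{eqn:Imax-t}, applicable because $S(t_f) = S_1(t_i^{\hat{0}}) \ge \Scrit$, and using equation \ref{eqn:S_I_relation} on the uncontrolled trajectory to rewrite $S(t_f) - \tfrac{1}{\Ro}\log S(t_f) = 1 - I_1(t_i^{\hat{0}})$ together with $-\tfrac{1}{\Ro} + \tfrac{1}{\Ro}\log\tfrac{1}{\Ro} = \Imax_1 - 1$, this collapses to
\[
\Imax(t_f) \;=\; \Imax_1 - I_1(t_i^{\hat{0}})\bigl(1 - e^{-\gamma\tau}\bigr).
\]
Thus for a full suppression intervention of duration $\tau$ begun at any $t_i \le t_{crit}$ the overall peak equals $\max\bigl\{\, I_1(t_i),\ \Imax_1 - I_1(t_i)(1 - e^{-\gamma\tau})\,\bigr\}$. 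The first branch is strictly increasing in $t_i$ and the second strictly decreasing on $(-\infty, t_{crit})$, so the two branches cross exactly once; since the maximum of a continuous increasing and a continuous decreasing function is minimized at their crossing, the optimal start time $t_i^0$ (which exists by Corollary \ref{corr:full_supression_optima}, with $t_i^0 < t_{crit}$ by Corollary \ref{corr:start-with-fuel}) must be that crossing point, and for every $t_i^{\hat{0}} \in [t_i^0, t_{crit}]$ the increasing branch dominates, so $\Imax_{\hat{0}} = I_{\hat{0}}(t_i^{\hat{0}})$.

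For the second part the argument is immediate: if $b_x$ is any intervention of duration $\tau$ beginning at $t_i^{\hat{0}}$, then $b_x \equiv 1$ before $t_i^{\hat{0}}$ forces $I_x(t_i^{\hat{0}}) = I_1(t_i^{\hat{0}}) = I_{\hat{0}}(t_i^{\hat{0}})$, and since $\Imax_x$ is the supremum of $I_x$ over all $t$ we get $\Imax_x \ge I_x(t_i^{\hat{0}}) = \Imax_{\hat{0}}$. So the late full suppression intervention already attains the least peak possible for an intervention starting at that moment.

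I expect the main obstacle to be the bookkeeping in the first part: reducing $\Imax(t_f)$ to the clean form $\Imax_1 - I_1(t_i)(1 - e^{-\gamma\tau})$, and then recognizing that the known optimal full‑suppression start time must be precisely the crossing point of the increasing branch $I_1(t_i)$ and the decreasing branch $\Imax_1 - I_1(t_i)(1-e^{-\gamma\tau})$. Everything after that is a monotonicity observation plus the triviality that any trajectory has $\Imax \ge I(t_i^{\hat{0}})$.
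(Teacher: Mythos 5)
Your proof is correct, and it reaches the same two conclusions the paper does via the same underlying monotonicity, but it packages the key step differently. The paper differentiates the general expression for $\Imax(t_f)$ (equation \ref{eqn:partials} with $f=0$) to get $\pdv{}{t_i}\Imax(t_f) = (e^{-\gamma\tau}-1)I'(t_i) < 0$, then leans on Theorem \ref{theorem:twin-peaks} (the $f=0$ case) for the fact that the optimized start time equates the starting prevalence with the post-intervention peak. You instead derive the closed form $\Imax(t_f) = \Imax_1 - I_1(t_i)\bigl(1-e^{-\gamma\tau}\bigr)$ directly from equations \ref{eqn:Imax-t} and \ref{eqn:S_I_relation}, which is exactly the antiderivative of the paper's expression; writing the total peak as $\max\{I_1(t_i),\,\Imax_1 - I_1(t_i)(1-e^{-\gamma\tau})\}$ then makes the crossing-point characterization of $t_i^0$ and the dominance of the increasing branch on $[t_i^0, t_{crit}]$ immediate. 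What your route buys is self-containment (you re-derive the relevant piece of the twin-peaks result rather than citing it) and an explicit, interpretable formula for the residual peak; what the paper's route buys is uniformity with the rest of the supplement, since the same first-order-condition machinery is reused for the $f>0$ cases. The final step — that no intervention starting at $t_i^{\hat{0}}$ can beat $I(t_i^{\hat{0}})$ because $b\equiv 1$ before the start pins $I_x(t_i^{\hat{0}}) = I_1(t_i^{\hat{0}})$ and $\Imax_x \ge I_x(t_i^{\hat{0}})$ — is identical in both.
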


\begin{proof}
It suffices to show that delaying a full suppression intervention by an infinitesimal $\epsilon$ diminishes the secondary peak. From equation \ref{eqn:partials} with $f=0$, it is clear that

\begin{equation}
\begin{aligned}
\pdv{}{t_i} \Imax(t_f) &= - \frac{S'(t_i)}{\Ro S_i} + e^{-\gamma \tau}I'(t_i) + S'(t_i)\\
\pdv{}{t_i} \Imax(t_f) &= \gamma I_i + e^{-\gamma \tau}(\beta I_i S_i-\gamma I_i)- \beta I_i S_i\\
\pdv{}{t_i} \Imax(t_f) &= (e^{-\gamma \tau}-1)(I'(t_i))
\end{aligned}
\end{equation}
But $I'(t_i)>0$ for $t_i<t_\mathrm{crit}$ and $ (e^{-\gamma \tau}-1)<0$ which means that delaying the full suppression decreases the post-intervention peak. Trivially, no intervention can attain a global \Imax{} value lower than its initial prevalence $I(t_i)$, which concludes the proof.
\end{proof}

\begin{theorem}[Twin peaks for fixed control interventions]
\label{theorem:twin-peaks-fixed}
Let $b_x$ be an optimized fixed control intervention of duration $\tau$ that starts at some time $t_i$ and has strictness $\sigma$. That is, $b_x$ outperforms any other fixed control intervention of duration $\tau$ regardless of their $t_i$ or $\sigma$. Then $I_x=\Imax_x$ at exactly two time points, one during the intervention ($t\in [t_i,t_i+\tau]$) and one after the intervention ($t\in (t_i+\tau,\infty)$).
\end{theorem}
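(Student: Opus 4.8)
The plan is to follow the template of Theorem \ref{theorem:twin-peaks}, the extra ingredient being a description of the infectious curve during a fixed control intervention. Since $b_\mathrm{fix}\equiv\sigma$ is constant, on $[t_i,t_i+\tau]$ we have $\frac{dI_x}{dt}=(\sigma\beta S_x-\gamma)I_x$, and as $S_x$ strictly decreases there, the factor $\sigma\beta S_x-\gamma$ is strictly decreasing; hence $I_x$ is unimodal on the intervention window. So $I_x^\tau\equiv\max\{I_x(t):t\in[t_i,t_i+\tau]\}$ is attained at a single point: at $t_i$ if $\Reff\le1$ at the start, at the unique interior time where $S_x=\frac{1}{\sigma\Ro}$ if $\Reff>1$ at the start and $S_x$ reaches that level before $t_i+\tau$, and at $t_i+\tau$ otherwise. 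Integrating $\frac{dI}{dS}=-1+\frac{1}{\sigma\Ro S}$ over the window yields a closed form for $I_x^\tau$ in terms of the starting state $(S_x(t_i),I_x(t_i))$. After the intervention $b\equiv1$, so $I_x$ either decays monotonically (if $S_x(t_i+\tau)\le\Scrit$) or rises to a unique post-intervention maximum $M_2\equiv\Imax(t_i+\tau)$ given by equation \ref{eqn:Imax-t} (attained where $S=\Scrit$) and then decays.

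I would first discard the extreme values of $\sigma$: $\sigma=1$ is the null intervention and is not optimal, while $\sigma=0$ is a full suppression, for which the claim is exactly the $f=0$ case of Theorem \ref{theorem:twin-peaks} (see also Corollary \ref{corr:full_supression_optima}). Assume then $0<\sigma<1$. The analogue of Lemma \ref{lemma:dont-be-late} holds: $\Imax_x$ cannot be attained before $t_i$, for that would force $S_x(t_i)\le\Scrit$, so the epidemic would already be declining, the intervention could not lower its already-attained peak, and $\Imax_x$ would equal the uncontrolled peak --- strictly beaten by a suitably timed full suppression of the same duration. Hence $S_x(t_i)>\Scrit$, the pre-intervention trajectory stays strictly below $I_x(t_i)\le I_x^\tau$, and $\Imax_x=\max(I_x^\tau,M_2)$. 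By the single-point facts above it then suffices to prove $I_x^\tau=M_2$.

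Suppose $I_x^\tau>M_2$ (the peak is during the intervention; this also subsumes the degenerate sub-case $S_x(t_i+\tau)\le\Scrit$, where the global maximum is attained only during the window). Perturb $t_i\mapsto t_i-\epsilon$, holding $\sigma$ fixed --- a legitimate move within the fixed control family. Using the closed form from the first paragraph (or, when the during-maximum sits at $t_i$, the uncontrolled-trajectory relation \ref{eqn:S_I_relation}) one checks that the during-peak value strictly decreases; in the interior case the change equals $\frac{1}{\Ro}\big(1-\frac{1}{\sigma}\big)\log\big(S_x(t_i-\epsilon)/S_x(t_i)\big)<0$, since $\sigma<1$ and $S_x(t_i-\epsilon)>S_x(t_i)$. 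As $M_2$ varies continuously with $t_i$ (Remark \ref{rem:continuity-of-I-max-candidates}), for small $\epsilon$ the perturbed fixed control intervention attains a strictly smaller $\max(I_x^\tau,M_2)$, contradicting optimality. In particular this forces $S_x(t_i+\tau)>\Scrit$ at the optimum, so $I_x$ is still rising at $t_i+\tau$ and the post-intervention peak occurs strictly after the window.

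The remaining case, $M_2>I_x^\tau$ (the resurgent peak dominates), is the main obstacle; it is the analogue of the step handled in Theorem \ref{theorem:twin-peaks} by inserting a short ``wait'' phase before a maintain--suppress intervention, a device unavailable inside the two-parameter fixed control family. The plan is to perturb $(t_i,\sigma)$ along the level set $\{I(t_i+\tau)=I_x(t_i+\tau)\}$ toward a weaker intervention (larger $\sigma$, with $t_i$ adjusted to hold the final infectious fraction fixed): a weaker intervention permits more transmission and hence depletes more susceptibles, leaving $I(t_i+\tau)$ unchanged but $S(t_i+\tau)$ strictly smaller. By Lemma \ref{lemma:fuel-fire} this strictly lowers $M_2$, while $I_x^\tau$ moves continuously and stays below $M_2$ for a small enough perturbation --- again contradicting optimality. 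The delicate point, which I expect to demand the most care, is showing such a direction genuinely exists: equivalently, that at the optimum $S(t_i+\tau)$ is not already locally minimal along $\{I(t_i+\tau)=\text{const}\}$. Since locally $\max(I_x^\tau,M_2)=M_2$ and $(t_i,\sigma)$ minimizes it, $(t_i,\sigma)$ would otherwise be an interior critical point of $M_2$ (the boundaries $\sigma\in\{0,1\}$ having been excluded), and one must rule this out by computing $\partial M_2/\partial\sigma$ and $\partial M_2/\partial t_i$ --- tracking through equation \ref{eqn:Imax-t} how $S(t_i+\tau)$ and $I(t_i+\tau)$ respond to strengthening versus delaying the intervention --- and verifying they cannot vanish simultaneously while $\Reff>1$ at the start. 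Once $I_x^\tau=M_2$ is established, the two single-point facts of the first paragraph give exactly two maximizing times, one during and one strictly after the intervention.
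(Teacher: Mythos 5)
Your overall architecture matches the paper's: reduce to two cases according to whether the global maximum falls inside or after the intervention window, and kill each case with a perturbation-plus-continuity argument. Your treatment of the first case (during-peak dominates) is actually more explicit than the paper's: the paper simply asserts by continuity that an earlier, slightly stricter fixed control lowers the intervention peak while keeping the rebound below it, whereas you perturb only $t_i$ and compute the change in the during-peak exactly, $\frac{1}{\Ro}\bigl(1-\frac{1}{\sigma}\bigr)\log\bigl(S_x(t_i-\epsilon)/S_x(t_i)\bigr)<0$. That formula is correct (it follows from integrating $\dv{I}{S}=-1+\frac{1}{\sigma\Ro S}$ and eliminating $I(t_i)$ via the pre-intervention invariant, equation \ref{eqn:S_I_relation}), and the unimodality observation that pins down where the during-maximum sits is sound.

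The genuine gap is in your second case, and you have flagged it yourself: when the post-intervention peak $M_2$ dominates, your argument reduces to showing that the optimum cannot be a critical point of $M_2$ in $(t_i,\sigma)$, or equivalently that $S(t_f)$ is not locally extremal along the level set $\{I(t_f)=\text{const}\}$ --- and you leave that verification as a plan rather than carrying it out. This is the decisive step of the whole theorem; without it the contradiction never materializes, and it is not obvious a priori that your level-set direction exists (increasing $\sigma$ while retiming to hold $I(t_f)$ fixed moves $S(t_f)$ in a direction that depends on exactly the derivatives you have not computed). The paper closes this with a cheaper, one-parameter argument: writing $\Imax(t_f)$ via equation \ref{eqn:Imax-t} and the within-intervention invariant, it shows $\pdv{}{t_i}\Imax = \frac{(1-\sigma)}{\sigma\Ro S_x(t_f)}\pdv{}{t_i}S_x(t_f)$, so a vanishing derivative would force $\pdv{}{t_i}S_x(t_f)=0$, hence $\pdv{}{t_i}I_x(t_f)=0$, hence $\pdv{}{t_i}R_x(t_f)=0$ --- contradicting the fact that starting the intervention earlier strictly reduces the recovered fraction at its end. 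If you want to complete your version, the most direct route is to adopt this single-variable computation in place of the two-parameter level-set construction; otherwise you must actually prove that $\pdv{M_2}{t_i}$ and $\pdv{M_2}{\sigma}$ cannot vanish simultaneously, which is more work than the theorem requires.
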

\begin{proof}
Because during the intervention and after the intervention, the time course is that of an SIR (albeit with a modified $\Ro$ during the intervention), it is clear that there can be at most one local $I$ maximum during the intervention and at most one after the intervention.

First let us assume that the infectious peak during the intervention is higher than any peak after the intervention. By continuity it is possible to construct a new fixed control intervention that starts a little earlier and is slightly stricter such that the peak during the intervention is reduced but the peak after the intervention is still below the intervention peak. This new intervention outperforms $b_x$ which contradicts the optimality of $b_x$.

Now let us assume that the infectious peak during the intervention is lower than any peak after the intervention. We will show that a small change in the $t_i$ of the intervention will lower the post-intervention peak. It suffices to show that $\pdv{}{t_i}\Imax\neq 0$. For a fixed control intervention we have

\begin{equation}
\begin{aligned}
\pdv{}{t_i}\Imax &= \frac{(1-\sigma)}{\sigma \Ro S_x(t_f)}*\pdv{}{t_i}S_x(t_f)
\end{aligned}
\end{equation}

Which when set to $0$ implies $\pdv{}{t_i}S_x(t_f)=0$. This in turn can be substituted into the equation that relates $I$ to $S$ under a fixed control intervention

\begin{equation}
\begin{aligned}
I(t_f) &= -S(t_f)+\frac{1}{\sigma \Ro}\log (S(t_f))+I(t_i)+S(t_i)-\frac{1}{\Ro}\log (S(t_i))
\end{aligned}
\end{equation}

Which leads to $\pdv{}{t_i}I_x(t_f)=0$ and finally $\pdv{}{t_i}R_x(t_f)=0$ but this is impossible because the fraction of recovered individuals by the end of a fixed control intervention must decrease as the intervention starts earlier. This completes the proof.

\end{proof}

\subsection[First-order conditions for the optimal intervention]{First-order conditions for $t_i$, $f$}
Applying equations \ref{eqn:opt-s-final}, \ref{eqn:opt-i-final}, the partial derivatives of $\Imax(t_f)$ with respect to $f$ and $t_i$ are given by:
\begin{equation}
\begin{aligned}
\label{eqn:partials}
\pdv{}{f} \Imax(t_f) &= \frac{(\gamma  \tau I_i)}{\Ro\Big(S_i - \gamma \tau f I_i\Big)} + \gamma \tau I_i e^{-(1-f)\gamma \tau} - \gamma \tau I_i \\ \\ 
\pdv{}{t_i} \Imax(t_f) &= - \frac{S'(t_i) - f \gamma \tau I'(t_i)}{\Ro \Big(S_i - f \gamma \tau I_i \Big)} + e^{-(1-f)\gamma \tau}I'(t_i) - f \gamma \tau I'(t_i) + S'(t_i)
\end{aligned}
\end{equation}

Substituting the values of $\dv{S}{t}, \dv{I}{t}$:
\begin{equation}
\begin{aligned}
    \pdv{}{t_i} \Imax(t_f) &= -\frac{\left(-\beta S_i I_i \right)-f\gamma\tau\left(\beta S_i I_i - \gamma I_i \right)}{\Ro\left(S_i -f\gamma\tau I_i \right)} \\ 
    & + e^{-\left(1-f \right)\gamma\tau}\left(\beta S_i I_i- \gamma I_i \right) \\ 
    &- f\gamma\tau\left(\beta S_i I_i - \gamma I_i \right) -\beta S_i I_i
\end{aligned}
\end{equation}

Setting equal to zero and simplifying: 

\begin{equation}
\begin{aligned}
\label{eqn:partial_f}
\pdv{}{f} \Imax(t_f) &= \frac{1}{\Ro\Big(S_i - f\gamma \tau I_i\Big)} + \gamma \tau e^{-(1-f)\gamma \tau} - 1 = 0 \\
\end{aligned}
\end{equation}

\begin{equation}
\begin{aligned}
\label{eqn:partial_t}
\pdv{}{t_i} \Imax(t_f) &= -\frac{\left(-\beta S_i \right)-f\gamma\tau\left(\beta S_i - \gamma \right)}{\Ro\left(S_i - f \gamma\tau I_i \right)} + e^{-\left(1-f \right)\gamma\tau}\left(\beta S_i - \gamma \right) \\ 
    &-f\gamma\tau\left(\beta S_i - \gamma \right) -\beta S_i = 0
\end{aligned}
\end{equation}

While these first-order conditions do not yield a simple closed form for the optimal $f$ and $t_i$, we use them in the proofs above and below to establish properties of the optimal strategy.

\begin{lemma}[If you have lots of gunpowder, shoot early]
\label{lemma:lots-of-powder-shoot-early}
For an optimal intervention $\bopt^\tau$ acting on a SIR with full susceptibility as its initial condition, as $\tau \to \infty$, $S_{opt}^{\tau}(t_{i})\to 1$; that is, we intervene almost immediately, when almost all the population is initially susceptible and a long intervention is possible.
\end{lemma}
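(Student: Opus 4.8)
The plan is to reduce the statement to showing that the optimized peak prevalence itself vanishes in the long-duration limit, $\Imax_\mathrm{opt}(\tau)\to 0$ as $\tau\to\infty$, and then to read off $S_i^{\mathrm{opt}}(\tau)$ from the pre-intervention trajectory. By Theorem \ref{theorem:twin-peaks} every optimal intervention has $I(t_i^{\mathrm{opt}}(\tau)) = \Imax_\mathrm{opt}(\tau)$, and by Corollary \ref{corr:start-with-fuel} its start satisfies $S_i^{\mathrm{opt}}(\tau) = S(t_i^{\mathrm{opt}}(\tau)) \in (\Scrit,1]$. On $[\Scrit,1]$ the pre-intervention relation (equation \ref{eqn:S_I_relation}) $I(S) = 1 - S + \frac{1}{\Ro}\log S$ is continuous and strictly decreasing in $S$ with $I(1)=0$, since its derivative $-1+\frac{1}{\Ro S}$ is nonpositive there and vanishes only at $S=\Scrit$. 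So $I(S_i^{\mathrm{opt}}(\tau))\to 0$ already forces $S_i^{\mathrm{opt}}(\tau)\to 1$, which is the claim; the whole task is to establish $\Imax_\mathrm{opt}(\tau)\to 0$.

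First I would do this by comparison with an explicit feasible (not necessarily optimal) intervention whose peak provably tends to $0$; optimality of $\bopt^\tau$ then finishes it. The candidate is a pure-maintenance intervention ($f=1$) started at exactly the susceptible level $S_\star(\tau)$ from which holding $\Reff=1$ for the full duration $\tau$ lands precisely at $\Scrit$. During maintenance $\Reff=1$, so $I$ stays at $I(t_i)$ while $S$ falls at rate $\gamma I(t_i)$ (cf.\ equation \ref{eqn:opt-s-final}); thus $S_\star(\tau)$ solves $S_\star - \gamma\tau\, I(S_\star) = \Scrit$. The left-hand side is strictly increasing in $S_\star$ on $[\Scrit,1]$ (derivative $1+\gamma\tau-\frac{\gamma\tau}{\Ro S_\star}\ge 1$) and runs from $\le\Scrit$ at $S_\star=\Scrit$ to $1$ at $S_\star=1$, so a unique $S_\star(\tau)\in(\Scrit,1)$ exists. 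Starting the free SIR trajectory at $S(t_i)=S_\star(\tau)$ is possible because $S(t)$ sweeps continuously through $(\Scrit,1)$ as $t$ decreases, and the resulting intervention is feasible since in the maintenance phase $b(t) = \frac{\gamma}{\beta S(t)} = \frac{1}{\Ro S(t)}\le 1$ whenever $S(t)\ge\Scrit$, which holds by construction. By construction $S(t_f)=\Scrit$, and since $I$ is unchanged during a pure-maintenance phase, $I(t_f)=I(t_i)=I(S_\star(\tau))$.

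Then I would evaluate the peak formula (equation \ref{eqn:Imax-t}) at this endpoint: setting $S(t_f)=\Scrit=\frac{1}{\Ro}$ cancels every susceptible-dependent term, leaving $\Imax(t_f) = I(t_f) = I(S_\star(\tau))$. From the defining equation for $S_\star(\tau)$, $I(S_\star(\tau)) = \big(S_\star(\tau)-\Scrit\big)/(\gamma\tau) \le (1-\Scrit)/(\gamma\tau)$, which tends to $0$. Hence $\Imax_\mathrm{opt}(\tau)\le I(S_\star(\tau))\to 0$, so $I(S_i^{\mathrm{opt}}(\tau)) = \Imax_\mathrm{opt}(\tau)\to 0$ and therefore $S_i^{\mathrm{opt}}(\tau)\to 1$. (This also identifies the value of the limit whose existence Corollary \ref{corr:longer_earlier} guarantees.)

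The hard part is the comparison step: choosing $S_\star(\tau)$ so that simultaneously the maintenance phase never requires $b(t)>1$ and the trajectory ends exactly at $\Scrit$. The endpoint condition is what makes the post-intervention peak equal the residual prevalence $I(t_f)$ rather than a quantity bounded away from $0$, which is precisely what lets the $O(1/\tau)$ decay of $I(S_\star(\tau))$ close the argument. Establishing existence and that decay of $S_\star(\tau)$ is a short monotonicity-and-intermediate-value computation; the remaining steps are immediate from Theorem \ref{theorem:twin-peaks} and Corollary \ref{corr:start-with-fuel}.
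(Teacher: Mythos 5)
Your proposal is correct and is essentially the paper's own argument: both compare the optimum against an explicit pure-maintenance intervention that drives $S$ down to $\Scrit$ within duration $\tau$, obtaining $\Imax_\mathrm{opt} \le (1-\Scrit)/(\gamma\tau) \to 0$ and then reading off $S_i^\mathrm{opt}\to 1$ from the pre-intervention curve via Theorem \ref{theorem:twin-peaks}. The only difference is parametrization — the paper fixes the starting prevalence at $\frac{1-\Scrit}{\gamma\tau}$ so the maintenance phase lasts at most $\tau$, while you fix the starting susceptible level $S_\star(\tau)$ so it lasts exactly $\tau$ — which yields the same bound.
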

\begin{proof}
Define an auxiliary intervention $b_{x}^{\tau}$ with $\tau > \frac{1-S_{crit}}{\gamma I^{max}}$ such that
\begin{equation}
\begin{aligned}
b_{x}^{\tau}(t)=\begin{cases}
    \frac{\gamma}{\beta S}, & \text{if } t>t_x \text{ and } S>S_{crit} \\
	1, & \text{elsewhere}
  \end{cases}
\end{aligned}
\end{equation}
with $t_x=I^{-1}_{1}(\frac{1-\Scrit}{\gamma \tau})$. It is clear that such an intervention has a duration of $\tau$ or less, and that by the end of such an intervention $S_x\leq \Scrit$. Therefore $\Imax_{x}=I_{x}(t_x)=\frac{1-\Scrit}{\gamma \tau}$, but by definition $\Imax_\mathrm{opt}\leq \Imax_{x}$. Combining with Corollary \ref{corr:peak-early}, $I_\mathrm{opt}^{\tau}(t_i)\leq \frac{1-\Scrit}{\gamma \tau}$ so as $\tau \to \infty$ both $I_\mathrm{opt}^{\tau}(t_{i})\to 0$ and $S_\mathrm{opt}^{\tau}(t_{i})\to 1$.
\end{proof}
\begin{theorem}[Sometimes maintain, always suppress]\label{theorem:always-suppress}
For an SIR with full susceptibility as its initial condition, it is the case that any optimal intervention with positive duration as defined by equation \ref{eqn:optimal-supp} has $f < 1$. In other words, the optimal intervention for an emerging pathogen ($S(0) \to 1$, $I(0)\to 0$) always has a total suppression phase. Also, for any $\Ro \in (1,\infty)$ there is a $\tau_{crit}$ such that the optimal intervention is always a full suppression intervention ($f=0$) if $\tau<\tau_{crit}$. Moreover, $\lim_{\Ro \to 1}\tau_{crit}=\infty$ and  $\lim_{\Ro \to \infty}\tau_{crit}=0$.
\end{theorem}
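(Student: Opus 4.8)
The plan is to reduce both assertions to one‑variable calculus for the function $f\mapsto\Imax(t_f)$. By Theorem~\ref{theorem:maintain-then-suppress} an optimal intervention is a maintain–suppress pair $(t_i,f)$ with $f\in(0,1]$; by Corollaries~\ref{corr:start-with-fuel} and~\ref{corr:f1} its start obeys $S_i:=S(t_i)>\Scrit$ and $S(t_f)=S_i-\gamma\tau f I_i\ge\Scrit$ with equality iff $f=1$; and $I_i:=I(t_i)=1-S_i+\frac{1}{\Ro}\log S_i$ by equation~\ref{eqn:S_I_relation}, since no intervention acts before $t_i$ and the epidemic starts fully susceptible. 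Substituting equations~\ref{eqn:opt-s-final} and~\ref{eqn:opt-i-final} into~\ref{eqn:Imax-t} and differentiating, equation~\ref{eqn:partials} can be written $\pdv{}{f}\Imax(t_f)=\gamma\tau I_i\,\psi(f)$ with $\psi(f)=\frac{1}{\Ro\left(S_i-\gamma\tau f I_i\right)}+e^{-(1-f)\gamma\tau}-1$. Both summands of $\psi$ strictly increase in $f$, so $\psi'>0$, hence $\Imax(t_f)$ is strictly convex in $f$ and is minimized at $f=0$ whenever $\psi(0)=\frac{1}{\Ro S_i}+e^{-\gamma\tau}-1\ge 0$. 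These observations do all the work.

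For $f<1$: suppose an optimal intervention has $f=1$. Then Corollary~\ref{corr:f1} forces $S(t_f)=\Scrit$, so it holds $I\equiv I_i$ over its whole duration and $\Imax=I_i$. I would build a strictly better competitor by starting the same all‑maintain intervention slightly earlier, at $S(t_i')=S_i+\delta$: from $\dv{I}{S}=\frac{1}{\Ro S}-1$ one gets $I(t_i')=I_i-m\delta+O(\delta^2)$ with $m:=1-\frac{1}{\Ro S_i}\in(0,1)$, and maintaining at that level for duration $\tau$ ends the epidemic at $\big(\Scrit+\delta(1+\gamma\tau m),\ I_i-m\delta\big)+O(\delta^2)$, still above $\Scrit$, so admissible. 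Plugging this endpoint into~\ref{eqn:Imax-t}, all $O(\delta)$ terms cancel except the one from the $I$‑coordinate, so the post‑intervention peak equals $I_i-m\delta+O(\delta^2)$, matching the plateau height; hence $\Imax<I_i$ for small $\delta>0$, contradicting optimality, so $f<1$. I expect \emph{this} to be the main obstacle: perturbing $f$ alone changes nothing (the plateau height $I_i$ is independent of $f$), so one must perturb the start time and check that the first‑order cancellation in~\ref{eqn:Imax-t} genuinely leaves an improvement, while also verifying admissibility of the perturbed intervention.

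For $\tau_{crit}$: take $\tau_{crit}(\Ro):=\frac{1}{\gamma}\log\frac{\Ro}{\Ro-1}>0$. If $\tau<\tau_{crit}$ then $e^{-\gamma\tau}>1-\frac{1}{\Ro}$, so for every admissible start $S_i\in(\Scrit,1]$ (note $S_i\le S(0)<1$) we get $\psi(0)\ge\frac{1}{\Ro}+e^{-\gamma\tau}-1>0$; by the convexity above $\Imax(t_f)$ is then strictly increasing in $f$ regardless of the start, so the globally optimal intervention of duration $\tau<\tau_{crit}$ has $f=0$. The limits are immediate: $\frac{1}{\gamma}\log\frac{\Ro}{\Ro-1}\to\infty$ as $\Ro\to1^+$ and $\to0$ as $\Ro\to\infty$. (One can locate the true phase boundary more sharply using that the optimal full‑suppression start $S_i^{0}(\tau)$ is non‑decreasing in $\tau$, Corollary~\ref{corr:longer_earlier}, while the break‑even level $S^{+}(\tau):=\Scrit/(1-e^{-\gamma\tau})$ from $\psi(0)=0$ strictly decreases; pinning down its exact value, and sharpening the $\Ro\to\infty$ limit, would need the asymptotics $S_i^{0}(\tau)-\Scrit=\Theta(\sqrt{\tau})$ near $\tau=0$, but the explicit $\tau_{crit}$ above already establishes the statement as written.)
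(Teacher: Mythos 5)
Your proof of $f<1$ is correct but follows a genuinely different route from the paper's. The paper evaluates the first-order condition (equation \ref{eqn:partial_f}) at $f=1$, substitutes $S_i-\gamma\tau I_i=\Scrit$ from Corollary \ref{corr:f1}, and reads off the contradiction $\gamma\tau=0$. You instead perturb the start time of the all-maintain intervention and exploit the fact that $G(S)=S-\frac{1}{\Ro}\log S$ has vanishing derivative at $\Scrit$, so that starting at $S_i+\delta$ lowers the plateau by $m\delta$ while changing the susceptible contribution to $\Imax(t_f)$ only at order $\delta^2$. This buys something real: as you note, the true objective is $\max\{I_i,\Imax(t_f)\}$, so lowering $f$ from $1$ (the move implicit in applying a first-order equality at a boundary point) leaves the overall peak pinned at the plateau $I_i$; an explicit competitor that shifts $t_i$ is the honest way to obtain a strict improvement, and your admissibility check ($S>\Scrit$ throughout the new maintain phase) closes the argument.

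The $\taucrit$ half has a genuine gap. Your explicit threshold $\frac{1}{\gamma}\log\frac{\Ro}{\Ro-1}$ is a correct \emph{sufficient} bound: below it, $\psi(0)>0$ uniformly over admissible starts, and monotonicity of $\psi$ forces $f=0$ (the residual step --- that a tie in $\max\{I_i,\Imax(t_f)\}$ between $f=0$ and some $f>0$ cannot occur at a global optimum --- follows from Theorem \ref{theorem:twin-peaks} but should be said). This yields existence of a threshold and the $\Ro\to1$ limit, since a divergent lower bound on the true critical duration forces the true one to diverge. But $\lim_{\Ro\to\infty}\taucrit=0$ is a claim about the \emph{actual} critical duration --- the supremum of $\tau$ for which full suppression is globally optimal, which is the quantity plotted in Supplementary Fig.~\ref{fig:f-sigma-of-tau}b --- and a lower bound tending to zero says nothing about it. You would need an upper bound: for large $\Ro$ and every $\tau$ above some $\epsilon(\Ro)\to0$, an intervention with $f>0$ that beats optimized full suppression. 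The paper gets this by characterizing $\taucrit$ exactly via the implicit equation \ref{eqn:implicit_tau} (combining the first-order condition at $f=0$ with $I_i=\Imax$ and equation \ref{eqn:S_I_relation}) and reading off $x=1-e^{-\gamma\taucrit}\to0$ as $\Ro\to\infty$; it also shows $\pdv{}{\tau}f^{\mathrm{opt}}>0$ at $f=0$, so that the set of durations with $f^{\mathrm{opt}}=0$ really is an interval $[0,\taucrit]$ --- another piece your sufficient-condition approach does not deliver. Your closing parenthetical shows you saw the issue, but as written the $\Ro\to\infty$ limit is not proved.
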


\begin{proof}
Suppose $f=1$. From equation \ref{eqn:partial_f}, 
\begin{equation}
\begin{aligned}
\frac{1}{\Ro\Big(S_i - \gamma \tau I_i\Big)} + \gamma \tau  - 1 = 0 
\end{aligned}
\end{equation}
But following Corollary \ref{corr:f1}, $S_i - \gamma \tau I_i=\Scrit$, which substituting in the previous equation implies that $\gamma \tau=0$, which is impossible. Therefore $f$ cannot be $1$.

Now let us investigate if $f=0$ can be a local minimum. Once again, from the first order condition expressed in equation \ref{eqn:partial_f} we require that

\begin{equation}
\begin{aligned}
\label{eqn:Si_f0}
S_i &=\frac{1}{\Ro x}\\
x&=1-e^{-\gamma \tau}\\
\end{aligned}
\end{equation}

The second order condition for an extremum to be a minimum is given by

\begin{equation}
\begin{aligned}
\pdv{^2}{f^2}\Imax &=I_i (\gamma \tau)^2 e^{- \gamma \tau (1-f)}+\frac{(\gamma \tau I_i)^2}{\Ro (S_i-\gamma \tau f I_i)^2}>0
\end{aligned}
\end{equation}

Note that the condition is always satisfied, and therefore any extremum is always a minimum. This also ensures that any minimum must be unique and that $\Imax$ grows monotonically as $f$ moves away from $f^\mathrm{opt}$.

From Theorem \ref{theorem:twin-peaks} we have $I_i=\Imax$. Substituting this and equations \ref{eqn:Si_f0} and \ref{eqn:S_I_relation} into equation \ref{eqn:imax_in_Si}, we get an implicit function for $x$ (and therefore for $\tau$) such that $f=0$ is a minimum:

\begin{equation}
\begin{aligned}
\label{eqn:implicit_tau}
0 &=x(\log (x)+x\log (\Ro x)-\Ro x)+1
\end{aligned}
\end{equation}

We now show that $\Ro \in (1,\infty)$ implies $x\in (0,1)$ which in turn implies a $\taucrit$ that satisfies the relation.

From equation \ref{eqn:implicit_tau} it is easy to see that as $\Ro \to 1$, then $x \to 1$, and as $\Ro \to \infty$, then $x \to 0$. So showing that $x(\Ro)$ is a continuous, decreasing function of $\Ro$ on $\Ro \in (1, \infty)$ will establish the existence of $\taucrit$.

By implicit differentiation we have that

\begin{equation}
\begin{aligned}
\label{eqn:implicit_tau_diff}
\pdv{x}{\Ro}=\frac{x^3 \big( \frac{\Ro-1}{\Ro} \big)}{x+x^2-x \log (x) -2}
\end{aligned}
\end{equation}

For positive $x$ the numerator is positive and the denominator is always negative if $x\in (0,1)$. Therefore, for $\Ro \in (1,\infty)$ there exists $x\in (0,1)$ that satisfies equation \ref{eqn:implicit_tau}.

To complete the proof, it suffices to show that $f^\mathrm{opt}$ is an increasing function of $\tau$ when $f=0$, as this implies that $f^\mathrm{opt}(\tau_{crit})=0$ and cannot go above $0$ for $\tau<\tau_\mathrm{crit}$. We start by letting  $f^\mathrm{opt}$, $S_i^\mathrm{opt}$, and $I_i^\mathrm{opt}$ be functions of $\tau$, applying equation \ref{eqn:partial_f}, implicitly differentiating with respect to $\tau$, and setting $f=0$. After a bit of algebra we obtain
\begin{equation}
\begin{aligned}
\big[ \tau e^{-\gamma \tau}+ \frac{\tau I_i^\mathrm{opt}}{\Ro (S_i^\mathrm{opt})^2}\big] *\pdv{}{\tau}f^\mathrm{opt}=e^{-\gamma \tau}+ \frac{1}{\Ro \gamma (S_i^\mathrm{opt})^2}*\pdv{}{\tau}S_i^\mathrm{opt}
\end{aligned}
\end{equation}

Note that $\pdv{}{\tau}f^\mathrm{opt}$ is being multiplied by a strictly positive quantity on the left-hand side of the equation. On the right-hand side of the equation there is a positive quantity ($e^{-\gamma \tau}$) added to $\pdv{}{\tau}S_i^\mathrm{opt}$ multiplied by another positive quantity ($\frac{1}{\Ro (S_i^\mathrm{opt})^2}$), but by Corollary \ref{corr:longer_earlier}, $\pdv{}{\tau}S_i^\mathrm{opt}>0$. It follows that $\pdv{}{\tau}f^\mathrm{opt}>0$ when $f^\mathrm{opt}=0$ which completes the proof.

\end{proof}

\subsection{A general classification of interventions}
From equation \ref{eqn:Imax-t} it is clear that there are two fundamental methods of reducing the peak of an epidemic: depleting the infected fraction and depleting the susceptible fraction. We have shown that different interventions achieve peak reduction with different combinations of those methods. Our optimal intervention, for example, is characterized by a pure susceptible depletion phase followed by a pure infected depletion phase. Full suppression interventions, in contrast, operate solely by depleting the infected fraction. We observe that interventions can be classified in terms of how much they rely on depleting the susceptible fraction versus depleting the infected fraction.

The effect of an intervention can be understood as the infectious peak if no intervention were to take place minus the infectious peak given the intervention. In a more formal notation, an intervention $b_x$ has an effect $\Imax-\Imax_x=\Imax(I_x(t_i),S_x(t_i))-\Imax(I_x(t_f),S_x(t_f))=\Delta_x(t_f)$. By applying equation \ref{eqn:Imax-t} we obtain

\begin{equation}
\begin{aligned}
\Delta_x(t_f)&=  -\Big[I_x(t_f)-I_x(t_i)\Big]-\Big[G\Big(S_x(t_f)\Big)-G\Big(S_x(t_i)\Big)\Big]\\
G(S) &=S-\frac{1}{\Ro}\log(S)\\
\end{aligned}
\end{equation}

Then if $-\Big[I_x(t_f)-I_x(t_i)\Big]>-\Big[G\Big(S_x(t_f)\Big)-G\Big(S_x(t_i)\Big)\Big]$ we can say that the intervention $b_x$ overall relies more on infected depletion, whereas if the opposite is true we can say that it relies more on susceptible depletion. Moreover, by applying the fundamental theorem of calculus, we obtain

\begin{equation}
\begin{aligned}
\int_{t_i}^{t_f}\Delta_{x}'(t)dt&= -\int_{t_i}^{t_f}I'_x(t)+S'_x(t)\Big(1-\frac{1}{\Ro S_x(t)}\Big)dt 
\end{aligned}
\end{equation}
Which allows us to look at a certain time $t \in [t_i,t_f]$ and say that if
\begin{equation}
\begin{aligned}
I'_x(t)<S'_x(t)\Big(1-\frac{1}{\Ro S_x(t)}\Big)
\end{aligned}
\end{equation}
Then at that moment $t$, the intervention $b_x$ acts more by depleting the infected fraction than by depleting the susceptible fraction. The condition can be simplified to
\begin{equation}
\begin{aligned}
b_x(t)\Big(2\Ro S_x(t)-1 \Big)<1
\end{aligned}
\end{equation}
\printbibliography[
title = {Supplementary References},
section = 2]
\end{refsection}

\end{document}